\documentclass[
times,
aps,
twocolumn,
showpacs,
superscriptaddress,
pre,
floatfix]{revtex4-1}

\usepackage{graphicx}
\usepackage{array}
\usepackage{color}
\usepackage{upgreek}
\usepackage{float}
\usepackage{enumerate}
\usepackage{enumitem}
\usepackage{lipsum}
\usepackage{booktabs}
\usepackage{url}
\usepackage{braket}

\usepackage{amsthm}
\usepackage{bm}
\usepackage[T1]{fontenc}
\usepackage{scrextend}
\setlength\parskip{0.85em}
\usepackage{hyperref}
\usepackage[dvipsnames]{xcolor}
\definecolor{C1}{RGB}{52, 89, 149}
\definecolor{C2}{RGB}{251, 77, 61}
\definecolor{C3}{RGB}{3, 206, 164}
\definecolor{C4}{RGB}{202, 21, 81}
\hypersetup{colorlinks=true, linkcolor=C2, citecolor=C2, urlcolor=C2}

\usepackage{dsfont}
\usepackage{epstopdf}
\usepackage{tikz}
\usepackage{mathrsfs}
\usepackage[export]{adjustbox}
\usepackage{thmtools}
\usepackage{thm-restate}

\usepackage{fdsymbol}

\usepackage{accents}

\newtheorem{thm}{Theorem}

\newtheorem{obs}[thm]{Observation}

\theoremstyle{remark}

\newcommand*{\ups}{\Upsilon}
\newcommand*{\ot}{\otimes}
\newcommand*{\nn}{\nonumber}

\newcommand*{\id}{\mathds{1}}

\newcommand*{\mc}{\mathcal}
\newcommand*{\dg}{\dagger}
\newcommand*{\ex}{\mathrm{e}}

\DeclareMathOperator{\tr}{tr}


\hypersetup{
pdftitle={Scrambling is Necessary but Not Sufficient for Chaos},
pdfsubject={Many-body quantum chaos},
pdfauthor={Neil Dowling, Pavel Kos, and Kavan Modi},
pdfkeywords={Quantum chaos, Many-body quantum physics, OTOC}
}

\begin{document}
\title[]{Scrambling is Necessary but Not Sufficient for Chaos} 

\author{Neil Dowling}
\email[]{neil.dowling@monash.edu}
\address{School of Physics \& Astronomy, Monash University, Clayton, VIC 3800, Australia}

\author{Pavel Kos}
\affiliation{Max-Planck-Institut f\"ur Quantenoptik, Hans-Kopfermann-Str. 1, 85748 Garching}

\author{Kavan Modi}
\address{School of Physics \& Astronomy, Monash University, Clayton, VIC 3800, Australia}
\affiliation{Centre for Quantum Technology, Transport for New South Wales, Sydney, NSW 2000, Australia}

\pacs{}

\begin{abstract}
    We show that out-of-time-order correlators (OTOCs) constitute a probe for Local-Operator Entanglement (LOE). There is strong evidence that a volumetric growth of LOE is a faithful dynamical {indicator} of quantum chaos, while OTOC decay corresponds to operator scrambling, often conflated with chaos. We show that rapid OTOC decay is a necessary but not sufficient condition for linear (chaotic) growth of the LOE entropy. We analytically support our results through wide classes of local-circuit models of many-body dynamics, including both integrable and non-integrable dual-unitary circuits. We show sufficient conditions under which local dynamics leads to an equivalence of scrambling and chaos.  
\end{abstract}

\keywords{Quantum chaos, Many-body quantum physics, OTOC}

\maketitle

\textit{Introduction.---} 
{The question of quantum chaos is a long-standing issue. In recent years, a wide plethora of apparently inequivalent notions of quantum chaos have appeared~\cite{BerryTabor1977,bgs1984,kos2018many,amos2018,Haake2018-cs,Prosen2007,Prosen2007a,Parker2019,Yan2020-ae,Kos2020,Oliviero2021-rf,Leone2021-ov,reichl2021transition,Anand2021-yi,Rosa2022,Dowling2022}. Among them, the most well-known defining features of quantum chaotic models are universal spectral fluctuations, which match those of random matrix theory. They were shown to arise for systems with chaotic semiclassical limits~\cite{BerryTabor1977,bgs1984}. In the absence of this limit, universal fluctuations were subsequently used as a definition of chaos and were recently extensively investigated in many-body settings~\cite{Dubertrand2016,kos2018many,amos2018,amos2018v2,Bertini2018KI,Bertini2021random, Garratt2021local, suntajs2020, braun2020transition, Srdinsek2021, Delacretaz2023, Cotler2017, saad2019semiclassical,Winer2020,bertini2022exact}.

Nevertheless, the spectral definition of quantum chaos comes with a few limitations. Firstly, taking the thermodynamic limit is non-trivial, as the discrete spectrum becomes hard to treat and is difficult to access experimentally and analytically. The spectrum might not even be defined, for instance in time-dependent evolution.
Therefore, a dynamical indicator of chaos, which is well-defined for infinite systems and finite times, is clearly an attractive prospect. 
 A popular candidate is the out-of-time-ordered correlator (OTOC), which measures a notion of scrambling in many-body systems~\cite{Shenker_Stanford_2014,Maldacena_Shenker_Stanford_2016,Swingle2016,Roberts2016,Swingle2018,Foini2019,Sunderhauf2019,swingle2020,Xu2022-ue}.
But its definition of `chaos' does not agree with the spectral one~\cite{pappalardi2018,Hashimoto2020,Pilatowsky2020,Xu2020,Shor_2021,Balachandran2021-af}.
On the other hand, a less studied signature of chaos which, in all known examples, agrees with the spectral definition is Local-Operator Entanglement (LOE). It is a well-justified measure of dynamical complexity and quantum chaos~\cite{Prosen2007,Prosen2007a}.

In this Letter, we give a novel interpretation for the OTOC by showing that it serves as a probe of LOE. In doing so, we will uncover simple cases where a dynamics is scrambling as signified by an exponential OTOC scaling, yet is not chaotic as demonstrated by the absence of linear LOE entropy growth. Along the way, we derive exact analytical results for a class of many-body local circuits~\cite{Akila_2016,Bertini2019exact}, including a novel exact computation of the OTOC which is of independent interest. 
Our results show that scrambling is necessary but not sufficient for quantum chaos.
}

The LOE is a measure of the complexity scaling of a Heisenberg {evolved} operator {$V_t := U_t^\dg (V \otimes \id_{\bar{B}}) U_t$} {for a local operator $V$~\cite{Prosen2007}}. {We consider an arbitrary isolated system with finite local Hilbert space dimension}. $V$ acts {locally} on a space $\mathcal{H}_B$, whereas $V_t$ {generally has support} on the full system $\mc{H}_S = \mc{H}_B \otimes \mc{H}_{\bar{B}}$. Above, {$U_t$ is an arbitrary time evolution operator.}

Specifically, the LOE is the entanglement of the Choi state of an initially local, unitary and traceless Heisenberg operator $V_t$,
\begin{equation} \label{eq:LOEdef}
    \ket{V_t} := ( V_t \otimes \id) \ket{\phi^+}, 
\end{equation}
where $\ket{\phi^+}$ is the maximally entangled state over a doubled space. As this is a pure quantum state, we can analyze its static quantum mechanical properties such as its entanglement (LOE). This entanglement can be computed across any bipartition and for any appropriate metric, such as $k-$R\'enyi entanglement entropy.

Despite not being as popular as notions of chaos based on Hamiltonian or Floquet spectra~\cite{BerryTabor1977,bgs1984}, LOE is an attractive candidate for a dynamical signature of quantum chaos in the context many-body systems. In particular: (i) It classifies the hardness of simulating the operator Heisenberg dynamics with tensor networks~\cite{Prosen2007a}, (ii) a wide range of studies into physical models support that volume-law LOE signifies non-integrability, with it scaling at most logarithmically with time for (interacting) integrable systems~\cite{Prosen2007a, Prosen2009,Dubail_2017,Jonay2018,Alba2019,Kos2020,Alba2021}, and (iii) it can be understood as a sensitivity to perturbation, analogous to the classical butterfly effect~\cite{Dowling2022}. Note that the entanglement of \emph{states} in a quantum many-body system is not a signature of chaos, with even free models generally exhibiting a linear growth~\cite{Fagotti2008,Keyserlingk2018}. Further, the LOE should not be confused with the related quantity of the `operator entanglement', which is the entanglement of the full, global unitary evolution operator~\cite{Zanardi2001,Styliaris2021}. This quantity generally scales linearly with $t$ irrespective of integrability~\cite{Dubail_2017}, unless the Hamiltonian is in a localized phase~\cite{Zhou2017}.

In comparison, {OTOC scaling generally indicates operator scrambling}, and is defined as a four-point correlator with atypical time ordering~\cite{Shenker_Stanford_2014,Maldacena_Shenker_Stanford_2016,Swingle2016,Roberts2016},
\begin{equation} \label{eq:otoc}
    F(W,V_t)= \frac{1}{d}\tr[W^\dg V^\dg_t W V_t],
\end{equation}
where we compute this expectation value over the maximally mixed state $\rho_{\infty} = \id/d$. We take $V$ and $W$ to be local unitaries, which wlog are traceless; see App.~\ref{ap:traceless}. 
In this case, the OTOC quantifies how much $V_t$ and $W$ {do not commute} as a function of time, $\mathrm{Re}[F(W,V_t)]= 1- \frac{1}{2} \braket{[W,V_t]^2}$.
The appeal for OTOC stems from a semiclassical argument connecting this equation to classical Poisson brackets, which are in turn related to Lyapunov exponents of a classical process. 

Yet, it remains unclear the precise connection of the OTOC with integrability. In fact, there is controversy in when the OTOC detects chaos in a range of quantum systems without a classical analogue~\cite{pappalardi2018,Hashimoto2020,Shor_2021,Balachandran2021-af} or even with~\cite{Pilatowsky2020,Xu2020}. In this work we clarify this confusion, showing that the OTOC probes dynamical chaos, including how it can fail in this purpose, and identifying sufficient conditions when scrambling is equivalent to chaos.

\textit{OTOC in Terms of Local Operator Choi State.---} We take $W$ in Eq.~\eqref{eq:otoc} to be on acting on some potentially large subspace $\mc{H}_A $ and $V$ on a local space $\mc{H}_B$, with complement spaces defined such that the whole isolated system is $\mc{H}_S = \mc{H}_A \otimes \mc{H}_{\bar{A}} = \mc{H}_B \otimes \mc{H}_{\bar{B}}$. These spaces are most clearly expressed via a graphical representation of ${F}(W,V_t)$:
\begin{equation}
    \frac{1}{d}\tr\left[\includegraphics[width=0.41\textwidth, valign=c]{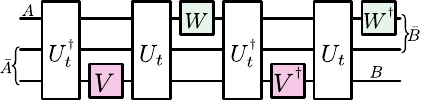}\right]. \nn
\end{equation}
We will use a bracket-prime notation to indicate a doubled space, e.g. prime $\mc{H}_{A^\prime}$ represents a copy of the space $\mc{H}_A$, while bracketed primes represent a combined double space, $\mc{H}_{A^{(\prime)}}:= \mc{H}_A \otimes \mc{H}_{A^\prime}$. For clarity, we rewrite the definition of the Choi state $\ket{V_t}$ (Eq.~\eqref{eq:LOEdef}),
\begin{equation}
    \ket{V_t}:= \frac{1}{d}\, \,\includegraphics[width=0.15\textwidth, valign=c]{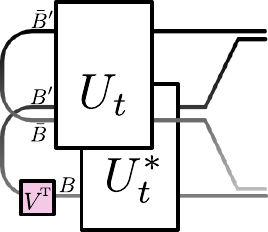} = \includegraphics[width=0.07\textwidth, valign=c]{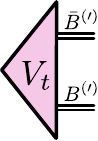}.
\end{equation}

In this setup, $V_t$ can be interpreted as the operator we are interested in, and $W$ as the \emph{probe} to the entanglement of the Choi state of this operator. The first hint of this relation can be seen by rewriting the OTOC in the following.
\begin{restatable}{obs}{otocObs} \label{obs:otocObs}
    The OTOC can be expressed in terms of the expectation value of a local unitary, $\mc{W}:=W \otimes W^*$, with respect to the Choi state of a time evolved Heisenberg operator, $\ket{V_t}$, 
    \begin{equation}
        F(W,V_t)=\bra{V_t} (\id_{\bar{A}^{(\prime)}} \ot \mc{W} ) \ket{V_t}. \label{eq:intuitive}
    \end{equation}
\end{restatable}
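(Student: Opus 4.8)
The plan is to unfold both sides of Eq.~\eqref{eq:intuitive} into the maximally-entangled-state representation and then collapse the doubled-space structure with the ``ricochet'' (transpose-transfer) identity for $\ket{\phi^+}$. First I would write the Choi state explicitly as $\ket{V_t} = (V_t \ot \id)\ket{\phi^+}$, with $\ket{\phi^+} = \f{1}{\sqrt{d}}\sum_i \ket{i}\ket{i}$ on $\mc{H}_S \ot \mc{H}_{S'}$, and note that the probe factorizes across the two copies as $\id_{\bar{A}^{(\prime)}} \ot \mc{W} = (W \ot \id_{\bar{A}})_S \ot (W^* \ot \id_{\bar{A}'})_{S'}$: it acts as $W$ on the unprimed $\mc{H}_A$, as its entrywise complex conjugate $W^*$ (in the basis defining $\ket{\phi^+}$) on the primed copy $\mc{H}_{A'}$, and as the identity on every complement factor $\mc{H}_{\bar{A}},\mc{H}_{\bar{A}'}$.

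The key step is the identity $(\id_S \ot B_{S'})\ket{\phi^+} = (B^T_S \ot \id_{S'})\ket{\phi^+}$, applied with $B = W^* \ot \id_{\bar{A}}$, so that the conjugate sitting on the primed leg is transferred back to the unprimed leg as $(W^*)^T = W^\dg$. Propagating the probe through $(V_t\ot\id)$ and applying this then yields $(\id_{\bar{A}^{(\prime)}} \ot \mc{W})\ket{V_t} = \big[\, (W \ot \id_{\bar{A}})\, V_t\, (W^\dg \ot \id_{\bar{A}}) \ot \id_{S'} \,\big]\ket{\phi^+}$, i.e. all of the doubled-space content has been pushed onto a single operator supported on $\mc{H}_S$.

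Finally I would sandwich with $\bra{V_t} = \bra{\phi^+}(V_t^\dg \ot \id)$ and use $\bra{\phi^+}(X_S \ot \id_{S'})\ket{\phi^+} = \f{1}{d}\tr[X]$ to obtain $\bra{V_t}(\id_{\bar{A}^{(\prime)}} \ot \mc{W})\ket{V_t} = \f{1}{d}\tr\big[ V_t^\dg\, (W \ot \id_{\bar{A}})\, V_t\, (W^\dg \ot \id_{\bar{A}}) \big]$, which is precisely $F(W,V_t)$ of Eq.~\eqref{eq:otoc} after one cyclic permutation of the trace (recalling that $W$ in Eq.~\eqref{eq:otoc} is shorthand for $W \ot \id_{\bar{A}}$). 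It is worth noting that neither unitarity nor tracelessness of $V$ or $W$ is needed here; the statement is just the Choi correspondence rewritten.

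The honest answer to ``main obstacle'' is that there is none of substance — this is an identity. The only thing needing care is the bookkeeping: tracking which tensor factor each prime and each complex conjugate lives on, and making sure the transpose produced by the ricochet step is taken in the same basis that defines $\ket{\phi^+}$ so that indeed $(W^*)^T = W^\dg$. In practice the cleanest write-up is diagrammatic: drawing $F(W,V_t)$ and $\ket{V_t}$ as the tensor networks displayed just above makes all contractions manifest, and the ricochet step amounts to ``bending'' the $W^*$ leg around the cap of $\ket{\phi^+}$.
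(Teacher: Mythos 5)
Your proposal is correct and follows essentially the same route as the paper's own proof: both rest on the Choi-state representation of the trace together with the ricochet identity $(AB\otimes\id)\ket{\phi^+}=(A\otimes B^{T})\ket{\phi^+}$, the only difference being that you run the computation from the expectation value back to the trace while the paper goes in the opposite direction (and likewise supplements the algebra with the diagrammatic version you sketch). Your side remark that neither unitarity nor tracelessness is needed for this particular identity is also accurate.
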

A proof for this can be found in App.~\ref{ap:otocObs}.

Examining the relation Eq.~\eqref{eq:intuitive}, if $\ket{V_t}$ is maximally entangled in the splitting $A^{(\prime)}:\bar{A}^{(\prime)}$, then the OTOC is equal to zero. Recalling that any maximally entangled state $\ket{\psi}$ corresponds to the Choi state of a unitary matrix $\mc{U}_\psi$, one can prove this from Eq.~\eqref{eq:intuitive} using standard graphical notation:
\begin{equation}
    \begin{split}
           F(W,V_t)&= \frac{1}{d}\, \,\includegraphics[width=0.17\textwidth, valign=c]{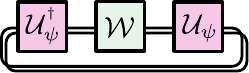} =|\tr[W] |^2 =0.  \label{eq:max_ent}
    \end{split}
\end{equation}
For example, a global Haar random time evolution, $U_t \in \mathbb{H}$, with a large total dimension $d$, will approximately give a maximally entangled $\ket{V_t}$ on average, which we show in App.~\ref{ap:max_ent}.

Alternatively, if $\ket{V_t}$ is not maximally entangled, $F(W,V_t)$ will generally be non-zero, suggesting that this quantity probes the LOE of $V_t$, through $W$. We will now quantify this relationship, showing the necessity of OTOC decay for chaotic LOE behavior.

\textit{Chaos Implies Scrambling.---} To investigate the general behavior of the OTOC, we will now compute its average value when a random traceless operator $W$ is sampled, and show that this sampling is typical. To uniformly sample a random matrix with the traceless property, a natural choice is choosing any traceless unitary $W$ and then applying a Haar random unitary channel to it, $W_{R}=R^\dg W R$ where $R \in \mathbb{H}$, the Haar measure. 
We define the averaged OTOC with respect to this traceless probe,
\begin{equation} \label{eq:haar}
    G(V_t):= \frac{1}{d}\int_{\mathbb{H}} dR \tr[ W_{R}^\dg  V^\dg_t W_{R} V_t ].
\end{equation}
Let us stress that we are \emph{not} averaging over the dynamics, and allow the time-evolution operator $U_t$ to be completely arbitrary. Hinting at a relation between the OTOC and LOE, our results will be framed in terms of $\nu_A(t)$, the (normalized) reduced density matrix of the Choi state $\ket{V_t}$ on (the doubled space) $\mc{H}_{A^{(\prime)}}=\mc{H}_A \otimes  \mc{H}_{A^\prime}$,
    \begin{equation}
        \nu_A(t):=\tr_{\bar{A}^{(\prime)}}[ \ket{V_t}\bra{V_t}]. \label{eq:nu_def}
    \end{equation}
We can then use standard techniques adapted from the Weingarten Calculus to arrive at our first main result.
\begin{restatable}{thm}{LOEOTOC} \label{thm:LOE_OTOC}
    The averaged OTOC over Haar random, traceless unitaries $W_R$ (as in Eq.~\eqref{eq:haar}) is equal to
    \begin{equation} \label{eq:mainThm}
        G(V_t)= \frac{1}{d_A^2-1} \left(d_A^2 \bra{\phi^+} \nu_A(t) \ket{\phi^+} - 1 \right),
    \end{equation}
    where $\ket{\phi^+}$ is the maximally entangled state across the doubled space $\mc{H}_{A^{(\prime)}}$. Further, this average is typical, with a single shot $F(W_R,V_t)$ exponentially likely in $(d_A \epsilon/8)^2$ to be $\epsilon-$close to $G(V_t)$.
\end{restatable}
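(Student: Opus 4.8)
The plan is to establish the two claims of Theorem~\ref{thm:LOE_OTOC} separately: the exact value of $G(V_t)$ via a second-moment (Weingarten) computation, and the typicality via a Levy-type concentration bound on the unitary group $U(d_A)$ from which $R$ is drawn.

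For the average, I would feed the twirled probe $W_R=R^\dagger W R$ into Observation~\ref{obs:otocObs}, giving $G(V_t)=\bra{V_t}\bigl(\id_{\bar A^{(\prime)}}\otimes M\bigr)\ket{V_t}=\tr_{A^{(\prime)}}[\nu_A(t)\,M]$ with $M:=\int_{\mathbb H}dR\,\mathcal W_R$ and $\mathcal W_R=W_R\otimes W_R^*$. The identity $\overline{R^\dagger}=\bar R^\dagger$ lets one rewrite $\mathcal W_R=(R\otimes\bar R)^\dagger\,\mathcal W\,(R\otimes\bar R)$ with $\mathcal W=W\otimes W^*$, so $M$ is the Haar twirl of the fixed operator $\mathcal W$ by the representation $R\mapsto R\otimes\bar R$ of $U(d_A)$ on $\mathcal H_{A^{(\prime)}}$. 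The structural input is that this representation decomposes as trivial $\oplus$ adjoint, the trivial isotype being spanned by $\ket{\phi^+}$; since both pieces are multiplicity-free, the twirl is the orthogonal projection onto the two-dimensional commutant $\operatorname{span}\{\,\ket{\phi^+}\!\bra{\phi^+},\ \id-\ket{\phi^+}\!\bra{\phi^+}\,\}$, with block coefficients fixed by Schur's lemma. Matching those coefficients needs only the two scalars $\bra{\phi^+}\mathcal W\ket{\phi^+}=\tfrac1{d_A}\tr[W^\dagger W]=1$ and $\tr[\mathcal W]=|\tr W|^2=0$ --- this is where unitarity and tracelessness of $W$ enter --- yielding $M=\tfrac{d_A^2}{d_A^2-1}\ket{\phi^+}\!\bra{\phi^+}-\tfrac1{d_A^2-1}\id$. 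Substituting back and using $\tr[\nu_A(t)]=1$ gives Eq.~\eqref{eq:mainThm}. (One could instead apply the explicit second-moment Weingarten formula for $U(d_A)$ directly to Eq.~\eqref{eq:haar}; the commutant picture is just cleaner bookkeeping of the same calculation.)

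For typicality, I would invoke concentration of Haar measure on $U(d_A)$: if $f:U(d_A)\to\mathbb R$ is $L$-Lipschitz in the Hilbert--Schmidt distance, then $\Pr[\,|f(R)-\mathbb E f|\ge\epsilon\,]\le\exp\!\bigl(-d_A\epsilon^2/(4L^2)\bigr)$. It thus remains to bound the Lipschitz constant of $f(R)=F(W_R,V_t)=\tfrac1d\tr[A(R)\,V_t^\dagger A(R)\,V_t]$ with $A(R)=(R^\dagger W R)\otimes\id_{\bar A}$. Telescoping, $f(R_1)-f(R_2)=\tfrac1d\tr[(A(R_1)-A(R_2))V_t^\dagger A(R_1)V_t]+\tfrac1d\tr[A(R_2)V_t^\dagger(A(R_1)-A(R_2))V_t]$; applying $|\tr[XYZ]|\le\|X\|_1\|Y\|_\infty\|Z\|_\infty$, unitarity of $A(R_i)$ and $V_t$, the partial-trace identity $\|A(R_1)-A(R_2)\|_1=(d/d_A)\|R_1^\dagger W R_1-R_2^\dagger W R_2\|_1$, a second telescope on $R_1^\dagger W R_1-R_2^\dagger W R_2$, and $\|\cdot\|_1\le\sqrt{d_A}\,\|\cdot\|_2$ on $\mathcal H_A$, gives $L\le 4/\sqrt{d_A}$. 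Inserting this yields exactly the stated bound $\exp\!\bigl(-(d_A\epsilon/8)^2\bigr)$.

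The main obstacle is the concentration half, and specifically getting the power of $d_A$ in the Lipschitz constant right: because $R$ lives only on the $d_A$-dimensional factor $\mathcal H_A$ whereas the OTOC trace runs over the full $d$-dimensional space, one must carefully track how the partial trace over $\mathcal H_{\bar A}$ and the $\|\cdot\|_1$--$\|\cdot\|_2$ conversion interact, and then feed $L$ into the sharp form of Levy's lemma to land on $(d_A\epsilon/8)^2$ rather than a parametrically weaker exponent. The averaging half is, by comparison, routine once the decomposition of $R\otimes\bar R$ is in hand; the one place to be careful is keeping complex conjugates and transposes straight when recasting $\mathcal W_R$ as a conjugation of $\mathcal W$.
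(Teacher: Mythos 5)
Your proposal is correct, and it reproduces both halves of the theorem with the right constants. The typicality half is essentially the paper's own argument: the same form of Levy's lemma on $U(d_A)$, the same H\"older-plus-telescoping estimate, and the same Lipschitz constant $4/\sqrt{d_A}$, which indeed yields $\exp[-(d_A\epsilon/8)^2]$; your reordering of the norm conversions (pulling out $\id_{\bar A}$ in the $1$-norm, $d_{\bar A}$, and only then using $\|\cdot\|_1\le\sqrt{d_A}\|\cdot\|_2$ on $\mc{H}_A$) is equivalent to the paper's route ($\|\cdot\|_1\le\sqrt{d}\|\cdot\|_2$ on the full space followed by $\|X\ot\id_{\bar A}\|_2=\sqrt{d_{\bar A}}\|X\|_2$), and the daggers you dropped on $A(R)$ are immaterial since $\|X^\dg\|_p=\|X\|_p$. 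The averaging half is the same second-moment computation but organized differently: you start from Observation~\ref{obs:otocObs}, twirl $\mc{W}=W\ot W^*$ under the representation $R\mapsto R\ot \bar R$, and use that its commutant is spanned by the isotypic projectors $\ket{\phi^+}\!\bra{\phi^+}$ and $\id-\ket{\phi^+}\!\bra{\phi^+}$, fixing the two coefficients from $\bra{\phi^+}\mc{W}\ket{\phi^+}=1$ and $\tr[\mc{W}]=0$. The paper instead applies the explicit two-fold Weingarten formula (in terms of $\id$ and the swap $\mathbb{S}$, i.e.\ the twirl of $W^\dg\ot W$ under $R^{\ot 2}$) directly to the trace expression and then spends the bulk of the proof manipulating the surviving swap terms and partial traces until $\bra{\phi^+}\nu_A(t)\ket{\phi^+}$ emerges. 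The two are related by a partial transpose and are the same calculation, but your commutant bookkeeping makes the appearance of $\ket{\phi^+}$ structural and immediate, at the cost of invoking the multiplicity-free decomposition (trivial $\oplus$ adjoint) rather than only the elementary Weingarten identity the paper quotes.
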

The proof for this, together with the formal statement of the typicality result, can be found in App.~\ref{ap:1Proof}.
This result presents both an explicit expression for the average OTOC, and that a random OTOC $F(W_R,V_t)$ rarely varies from the average. This is important as the OTOC average $G(V_t)$ features in the rest of this work, and we can be assured that the average case is representative of the typical one. 

Notice that the first term in Eq.~\eqref{eq:mainThm} is proportional to the fidelity between $\nu_A(t)$ and the identity matrix Choi state $\ket{\phi^+}\!\bra{\phi^+}$. In words, this theorem states that the average OTOC is \emph{proportional to the distance between the actual reduced state of $\ket{V_t}$ on $\mc{H}_{A^{(\prime)}}$ and the state of the identity channel}. Interestingly, considering $V_t$ as a unitary channel, this fidelity is exactly equal to the entanglement fidelity of the reduced channel on $\mc{H}_A$, which in turn is proportional to the (efficiently computable) average gate fidelity~\cite{NIELSEN2002}.

\begin{figure}[t]
    \centering
    \includegraphics[width=0.37\textwidth]{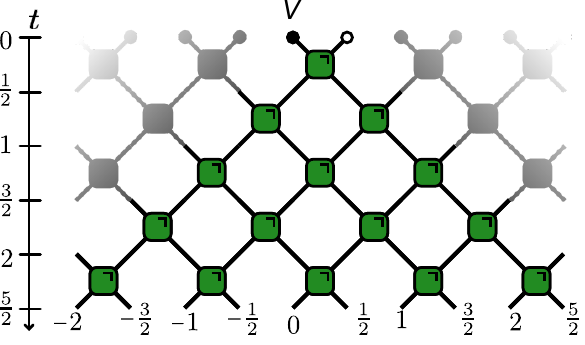}
    \caption{Brickwork circuit models of dynamics consist of repeated 2-site unitaries on a one dimensional lattice. This is given in the folded (superoperator) representation, where each `brick' represents two copies of $U$ [Eq.~\eqref{eq:u}], open (colored) circles represent the vectorized identity $\ket{\phi^+}$ ($\ket{V_t}$) state, and time goes from top to bottom. The lightcone for a single-site operator $V$ is shown in green. 
    }
    \label{fig:brick}
\end{figure}

Ref.~\cite{Styliaris2021} reports similar results to Thm~\ref{thm:LOE_OTOC}. However, there the Haar average is taken for the `bipartite OTOC' over \textit{both} $V$ and $W$, whose support bipartition the whole system. Our results are distinct, and less restrictive, in allowing the operators to have arbitrary locality, averaging over only one of the unitaries, and most importantly connecting this to LOE (chaos).

 To pick apart Theorem~\ref{thm:LOE_OTOC}, consider the example of dynamics consisting of a circuit of swaps $U_t \in \mathbb{S}$. Then, if the operator $V$ is swapped onto a site within the space $\mc{H}_A$, the OTOC takes a minimum value, 
 \begin{align}
        G(V_t)|_{U_t \in \mathbb{S}}&=\begin{cases}
    \frac{-1}{d_A^2-1},& \text{if } V_t \in \mc{B}(\mc{H}_A)\\
    1,              & \text{otherwise},
\end{cases} \label{eq:pathogolocal} 
\end{align} 
given that $V$ is taken to be traceless. Similarly, one would (approximately) get this result for $\nu_A(t) $ being (close to) any pure state which is orthogonal to $\ket{\phi^+}\!\bra{\phi^+}$. This is an example of scrambling without chaos: a minimal OTOC is achieved for an integrable dynamics. In fact, it is a simple example of a wide class of local circuit models, which we will analyze later in Results~\ref{thm:scrambNoChaos}-\ref{thm:completeChaos}. Alternatively, we saw earlier in Eq.~\eqref{eq:max_ent} that a maximally entangled $\ket{V_t}$ leads to a small OTOC. This begs the question of what the OTOC tells us if $\ket{V_t}$ is partially entangled? Can we further understand this OTOC average as a quantitative probe to the time-scaling of the LOE? 

We answer this with the following two bounds, in terms of two different entanglement measures.
\begin{restatable}{thm}{GeoEnt} \label{thm:GeoEnt}
    \textbf{(Scrambling is Necessary for Chaos)} The OTOC, averaged over traceless unitary operators $W \in \mc{B}(\mc{H}_A)$, is bounded by the entanglement on $\mc{H}_{A^{(\prime)}}$ of the time-evolved local operator $V_t$:
    \begin{enumerate} [label=\textbf{\Alph*}.]
        \item For geometric measure of entanglement, $E_G(\ket{\phi}) := 1 - \underset{\ket{\psi_{A^{(\prime)}}\psi_{\bar{A}^{(\prime)}}}}{\mathrm{max}} |\braket{\psi_{A^{(\prime)}}\psi_{\bar{A}^{(\prime)}}|\phi}|^2$, where the maximum is over all product states $\ket{\psi_{A^{(\prime)}}\psi_{\bar{A}^{(\prime)}}}$, $G(V_t)$ satisfies
        \begin{equation} \label{eq:geoEntBound}
            G(V_t) \leq 1 - \frac{d_A^2}{d_A^2-1}E_G(\ket{V_t}).
        \end{equation} 
    \item  For the 2-{R\'enyi} entropy $S^{(2)}(\nu) := -\log(\tr[\nu^2])$, $G(V_t)$ satisfies
    \begin{equation} \label{eq:ReyniBound}
        G(V_t)\leq \frac{1}{d_A^2-1} \left(d_A^2 \ex^{-\frac{1}{2} S^{(2)}(\nu_A(t)) } - 1 \right).
    \end{equation}
    \end{enumerate}
\end{restatable}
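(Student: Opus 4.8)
The plan is to start from Theorem~\ref{thm:LOE_OTOC}, which already expresses the averaged OTOC as $G(V_t)=\frac{1}{d_A^2-1}\left(d_A^2\,\bra{\phi^+}\nu_A(t)\ket{\phi^+}-1\right)$, and then bound the single quantity $\bra{\phi^+}\nu_A(t)\ket{\phi^+}$ from above by the relevant entanglement measure of $\ket{V_t}$. Both parts reduce to controlling this overlap. Note that $\nu_A(t)$ is the reduced state of the pure state $\ket{V_t}$ on the doubled space $\mc{H}_{A^{(\prime)}}$, and $\ket{\phi^+}$ (on $\mc{H}_{A^{(\prime)}}$) is a \emph{product} state across the bipartition $A^{(\prime)}:\bar A^{(\prime)}$ only in the sense that it lives entirely on $\mc{H}_{A^{(\prime)}}$; the key structural fact I would use is that $\ket{\phi^+}_{A^{(\prime)}}$ is itself a maximally entangled state across the $A:A'$ cut, equivalently the Choi state of $\id_A$.

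\textbf{Part A (geometric measure).} First I would observe that $\bra{\phi^+}\nu_A(t)\ket{\phi^+} = \sum_j |\langle \phi^+_{A^{(\prime)}}|e_j\rangle_{\bar A^{(\prime)}}| \ket{V_t}|^2$-type expansion: writing a Schmidt decomposition of $\ket{V_t}$ across $A^{(\prime)}:\bar A^{(\prime)}$ as $\ket{V_t}=\sum_k \sqrt{\lambda_k}\ket{a_k}_{A^{(\prime)}}\ket{b_k}_{\bar A^{(\prime)}}$, one gets $\bra{\phi^+}\nu_A(t)\ket{\phi^+}=\sum_k \lambda_k |\langle\phi^+|a_k\rangle|^2 \le \max_k |\langle\phi^+|a_k\rangle|^2$. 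Now I want to relate $\max_k|\langle\phi^+|a_k\rangle|^2$ to the geometric entanglement. The cleanest route: for any fixed product state $\ket{\psi_{A^{(\prime)}}\psi_{\bar A^{(\prime)}}}$ we have $|\langle\psi_{A^{(\prime)}}\psi_{\bar A^{(\prime)}}|V_t\rangle|^2 \ge 1 - E_G(\ket{V_t})$ by definition of $E_G$ as the \emph{maximum} overlap. Choosing $\ket{\psi_{A^{(\prime)}}}=\ket{\phi^+}$ (the candidate product state on the $A^{(\prime)}$ factor) and optimizing over $\ket{\psi_{\bar A^{(\prime)}}}$ gives $\|(\bra{\phi^+}\otimes\id)\ket{V_t}\|^2 = \bra{\phi^+}\nu_A(t)\ket{\phi^+} \ge$ that maximized overlap $\ge 1-E_G(\ket{V_t})$ — but that is a \emph{lower} bound, the wrong direction. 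The correct argument is the reverse: $E_G(\ket{V_t}) \le 1 - \max_{\ket{\psi_{\bar A^{(\prime)}}}}|\langle\phi^+|\langle\psi_{\bar A^{(\prime)}}|V_t\rangle|^2 = 1 - \|(\bra{\phi^+}\otimes\id)\ket{V_t}\|^2 = 1 - \bra{\phi^+}\nu_A(t)\ket{\phi^+}$, since restricting the product-state optimization in $E_G$ to those with $A^{(\prime)}$-factor equal to $\ket{\phi^+}$ can only decrease the attainable overlap. Hence $\bra{\phi^+}\nu_A(t)\ket{\phi^+} \le 1 - E_G(\ket{V_t})$. Substituting into Theorem~\ref{thm:LOE_OTOC}: $G(V_t) \le \frac{1}{d_A^2-1}(d_A^2(1-E_G)-1) = 1 - \frac{d_A^2}{d_A^2-1}E_G(\ket{V_t})$, which is \eqref{eq:geoEntBound}.

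\textbf{Part B (2-Rényi).} Here I would bound the overlap $\bra{\phi^+}\nu_A(t)\ket{\phi^+}$ by the largest eigenvalue of $\nu_A(t)$, then relate that to the purity. Since $\ket{\phi^+}$ is a unit vector, $\bra{\phi^+}\nu_A(t)\ket{\phi^+} \le \lambda_{\max}(\nu_A(t))$. Then use $\lambda_{\max}(\nu_A(t)) \le \sqrt{\tr[\nu_A(t)^2]}$ (the largest eigenvalue is at most the $\ell^2$-norm of the spectrum), and $\tr[\nu_A(t)^2] = \ex^{-S^{(2)}(\nu_A(t))}$ by definition, giving $\bra{\phi^+}\nu_A(t)\ket{\phi^+} \le \ex^{-\frac12 S^{(2)}(\nu_A(t))}$. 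Plugging into Theorem~\ref{thm:LOE_OTOC} yields \eqref{eq:ReyniBound} directly. I should double check whether a tighter elementary bound is intended (e.g.\ via $\bra{\phi^+}\nu\ket{\phi^+}\le \tr[\nu^2]/\bra{\phi^+}\nu\ket{\phi^+}$ rearranged, which would instead give $\le (\tr\nu^2)^{1/2}$ as well by AM–GM) — both lead to the same displayed constant, so the $\lambda_{\max}\le\|\cdot\|_2$ step is the honest one.

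\textbf{Main obstacle.} The technical content is light once Theorem~\ref{thm:LOE_OTOC} is in hand; the one subtle point I expect to need care is the direction of the inequality in Part A — making airtight that restricting the product-state maximization in the definition of $E_G$ to product states whose $\mc{H}_{A^{(\prime)}}$-component is exactly $\ket{\phi^+}$ yields a lower bound on $1-E_G$ is correct (a sub-maximization), but one must be careful that $\ket{\phi^+}_{A^{(\prime)}}$ genuinely is a legal single-party state for the $A^{(\prime)}:\bar A^{(\prime)}$ bipartition used in $E_G$ — it is, since $\mc{H}_{A^{(\prime)}}=\mc{H}_A\otimes\mc{H}_{A'}$ is treated as one party there. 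A secondary check is that the typicality / locality conventions for $W$ (now $W\in\mc{B}(\mc{H}_A)$, traceless) match exactly the hypotheses of Theorem~\ref{thm:LOE_OTOC} so that its formula applies verbatim; this is routine but worth a sentence.
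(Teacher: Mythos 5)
Your proof is correct and takes essentially the same route as the paper: both parts start from Theorem~\ref{thm:LOE_OTOC} and bound the single overlap $\bra{\phi^+}\nu_A(t)\ket{\phi^+}$, by $1-E_G(\ket{V_t})$ in Part A and by $\sqrt{\tr[\nu_A(t)^2]}=\ex^{-\frac{1}{2}S^{(2)}(\nu_A(t))}$ in Part B. The only (harmless) differences are that in Part A you replace the paper's step via Uhlmann's theorem (after maximizing over pure states $\ket{\psi_{A^{(\prime)}}}$) with a direct restricted-maximization and norm computation, and in Part B you use $\bra{\phi^+}\nu_A(t)\ket{\phi^+}\le\lambda_{\max}(\nu_A(t))\le\sqrt{\tr[\nu_A(t)^2]}$ where the paper applies Cauchy--Schwarz for the Hilbert--Schmidt inner product; the initial wrong-direction attempt in Part A is explicitly discarded, so it introduces no gap.
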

A proof for this builds on Thm.~\ref{thm:LOE_OTOC}, and is supplied in App.~\ref{ap:2Proof}. Note that we only used the inequality $\mc{D}(\nu_A(t),\ket{\phi^+}\!\bra{\phi^+}) \leq {\max}_{\ket{\psi}}(\mc{D}(\nu_A(t),\ket{\psi}\!\bra{\psi}))$ for some distance metric $\mc{D}$, to arrive at Eq.~\eqref{eq:geoEntBound}. Therefore it is likely relatively tight for a generic evolution, where $V_t$ does not recohere into a local, pure unitary channel. Indeed, from numerics we notice that Eq.~\eqref{eq:geoEntBound} seems to be tighter than Eq.~\eqref{eq:ReyniBound}. However, in general geometric measures are not practically accessible due to the required optimization over all separable states. On the other hand, the {R\'enyi} 2-entropy is. {We note that a similar result is derived in Ref.~\cite{swingle2020}, where they lower bound LOE by an extensive sum of OTOCs, in contrast to our bound in terms of a single, average OTOC.} The bound \eqref{eq:ReyniBound} is our main result, and will be investigated in the remainder of this work.

The LOE entropy grows at fastest linearly, with strong evidence that this maximal scaling is saturated iff the dynamics are chaotic~\cite{Jonay2018,Kos2020}, compared to logarithmic growth for integrable systems~\cite{Prosen2007a,Prosen2009,Muth2011,Dubail_2017,Alba2019,Alba2021}. Therefore, Eq.~\eqref{eq:ReyniBound} gives us a bound on scrambling
\begin{equation}
    \begin{split}
     G(V_t)\lesssim&\begin{cases}
   B \exp(-\alpha t) ,& \text{if $U_t$ chaotic, } \\
    C t^{-\alpha},              & \text{if $U_t$ regular}.
    \end{cases} \label{eq:scaling}
    \end{split}
\end{equation}
This should not be confused with the lower bounds on thermally regulated OTOCs from Refs.~\cite{Maldacena_Shenker_Stanford_2016,Srednicki2019otoc}, valid for fast-scrambling systems.

Theorem~\ref{thm:GeoEnt} therefore states that fast decay of the OTOC is necessary for chaos. However, the counter argument is not necessarily true, i.e. the bounds in Eq.~\eqref{eq:scaling} are not necessarily tight. We will now examine our results through classes of local circuit models, to uncover: (i) When the OTOC decays fast for slowly decaying LOE; i.e. scrambling without chaos, and (ii) When Eq.~\eqref{eq:ReyniBound} is saturated.

\textit{Application to Local Circuit Models.---} `Brickwork' circuits consist of layers of two-body unitary gates which are applied to next-neighbor sites on a lattice (see Fig.~\ref{fig:brick})~\cite{Nahum2017,Nahum2018,Keyserlingk2018,amos2018,Khemani2018,Fisher2023}, 
\begin{equation}
    U:\mc{H}_{\mathrm{i}_1}\otimes \mc{H}_{\mathrm{i}_2} \to \mc{H}_{\mathrm{o}_1}\otimes \mc{H}_{\mathrm{o}_2} . \label{eq:u}
\end{equation}
It is necessary to introduce some notation. We take the initial operator $V$ to have support on a single site which we specify wlog to be at $y=0$, where both the sites and time steps are labeled with half integers as in Fig.~\ref{fig:brick}. Then the disjoint spaces $\mc{H}_{A}$ and $\mc{H}_{\bar{A}}$ are labeled by the list of integers of the spins they contain, $\ell_A$ and $\ell_{\bar{A}}$ respectively. The following results will cover two exclusive cases: {when $\mc{H}_A$ is connected and contains either the left light cone edge ($t\in \ell_A$), or the right edge  ($t \in \ell_{\bar{A}}$)}. {This is a technical restraint, related to what is analytically computable for so-called dual unitary dynamics, which we later describe~\cite{Bertini2019exact}.} Finally, we define {convenient coordinates} $x_{\pm}:=t\pm a$, {with $a$ the edge of region $\mc{H}_{A^{(\prime)}}$ within the lightcone, such that:}
\begin{equation}
     \ket{V_t}=\includegraphics[width=0.4\textwidth, valign=c]{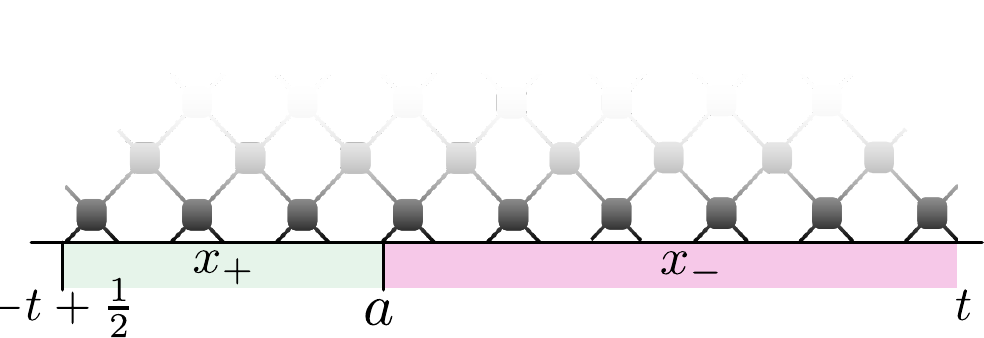} \nn
\end{equation}

In completely general brickwork unitary circuits, both the OTOC and the LOE entropy {scale} in terms of the same matrix $G(V_t) \sim (\mc{T}_{x_+}[U])^{x_-} \sim e^{-\frac{1}{2} S^{(2)}(\nu_A(t))}$. {Informally, $\mc{T}_{x_+}[U]$ is a spacetime transfer matrix, representing the contraction of the dual unitary circuit along the lightcone direction. This is a well studied object~\cite{Kos2020,Claeys2020,DeepTherm2022}, and we define it in detail in App.~\ref{ap:Pavel_TransferMatrix}.}
This leads to our first result on local circuit OTOC behavior.\textsuperscript{\footnote{Note that the following observation is not valid in some rare cases, where there could be a zero prefactor in front of the leading order on only one side of Eq.~\eqref{eq:mainThm}.}}
\begin{restatable}{obs}{brickObs}\label{obs:brickwork}
    When the term $-\frac{1}{d_A^2-1}$ from Eq.~\eqref{eq:mainThm} can be neglected, both sides of the inequality Eq.~\eqref{eq:ReyniBound} have generically the same leading order scaling for large $x_-$, but constant $x_+$. 
\end{restatable}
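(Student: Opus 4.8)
The plan is to show that both $G(V_t)$ (via Theorem~\ref{thm:LOE_OTOC}) and $e^{-\frac{1}{2} S^{(2)}(\nu_A(t))}$ are, to leading order in $x_-$ at fixed $x_+$, powers of the \emph{same} spacetime transfer matrix $\mc{T}_{x_+}[U]$, so their leading exponential behavior must coincide. First I would express $\bra{\phi^+}\nu_A(t)\ket{\phi^+}$ graphically in the folded picture of Fig.~\ref{fig:brick}: contracting the Choi state $\ket{V_t}$ with $\bra{\phi^+}$ on $\mc{H}_{A^{(\prime)}}$ folds the circuit back on itself along the cut at position $a$, and the resulting tensor network is a one-dimensional chain of $x_-$ identical transfer blocks acting in the lightcone direction labelled by $x_+$ — this is precisely (the definition of) $\mc{T}_{x_+}[U]$ from App.~\ref{ap:Pavel_TransferMatrix}. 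Hence $d_A^2\bra{\phi^+}\nu_A(t)\ket{\phi^+} = \bra{L}(\mc{T}_{x_+}[U])^{x_-}\ket{R}$ for appropriate fixed boundary vectors $\ket{L},\ket{R}$ independent of $x_-$, and by Theorem~\ref{thm:LOE_OTOC}, when the additive $-\tfrac{1}{d_A^2-1}$ is negligible, $G(V_t)\simeq \frac{1}{d_A^2-1}\bra{L}(\mc{T}_{x_+}[U])^{x_-}\ket{R}$.

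Next I would do the same for the 2-R\'enyi quantity. $\tr[\nu_A(t)^2]$ is the contraction of two copies of $\ket{V_t}\!\bra{V_t}$ glued along $\mc{H}_{\bar{A}^{(\prime)}}$; in the lightcone coordinates this again organizes into a chain of $x_-$ identical blocks in the $x_+$ direction. One checks — this is the computation deferred to App.~\ref{ap:Pavel_TransferMatrix} — that the relevant block is (a power/variant of) the \emph{same} $\mc{T}_{x_+}[U]$, so that $\tr[\nu_A(t)^2] = \bra{L'}(\mc{T}_{x_+}[U])^{x_-}\ket{R'}$ with other fixed boundary vectors, and therefore $e^{-\frac12 S^{(2)}(\nu_A(t))} = \big(\bra{L'}(\mc{T}_{x_+}[U])^{x_-}\ket{R'}\big)^{1/2}$. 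Then I would invoke elementary transfer-matrix asymptotics: for fixed $x_+$ the matrix $\mc{T}_{x_+}[U]$ has a leading eigenvalue $\lambda_{x_+}$ (of some algebraic multiplicity), and generically — i.e. whenever the boundary vectors have nonzero overlap with the corresponding leading eigenspace, which is the content of the footnote caveat — both $\bra{L}(\mc{T}_{x_+}[U])^{x_-}\ket{R}$ and $\bra{L'}(\mc{T}_{x_+}[U])^{x_-}\ket{R'}$ behave as $(\text{const})\,\lambda_{x_+}^{x_-}$ up to subleading corrections. Consequently both sides of Eq.~\eqref{eq:ReyniBound} scale as $\lambda_{x_+}^{x_-}$ to leading order (with the R\'enyi side possibly picking up a harmless square root that does not affect the \emph{leading} exponent once one recalls $S^{(2)}$ enters as $e^{-\frac12 S^{(2)}}$), which is the claim.

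The main obstacle is the second step: verifying that the block appearing in the two-copy contraction for $\tr[\nu_A^2]$ is genuinely governed by the \emph{same} transfer matrix $\mc{T}_{x_+}[U]$ (up to an overall power or a trivial reshaping), rather than merely a structurally similar object with a different spectrum. This requires carefully tracking how the folding identifies legs — in particular, that the extra copy introduced by squaring $\nu_A$ does not change the effective "width" seen in the $x_+$ direction, which is exactly the point where the technical restriction that $\mc{H}_A$ be connected and abut one edge of the lightcone is used. The remaining delicate point is the word "generically": I would state precisely (and defer to the appendix) that the equivalence of leading orders can fail when a boundary vector is orthogonal to the leading eigenspace on one side but not the other — the situation flagged in the footnote — so the statement is about the typical, non-fine-tuned case.
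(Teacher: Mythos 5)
Your first step matches the paper: $\bra{\phi^+}\nu_A(t)\ket{\phi^+}$ is indeed a fixed-boundary sandwich of powers of the spacetime transfer matrix, and your "generic" caveat about boundary vectors having nonzero overlap with the leading eigenspace is exactly the caveat the paper makes in App.~\ref{ap:Pavel_TransferMatrix}. However, there is a genuine gap in your treatment of the R\'enyi side, and it sits precisely at the point you yourself flagged as the main obstacle. The purity is \emph{not} a single-copy sandwich $\bra{L'}\mc{T}^{\,p}\ket{R'}$ of the same transfer matrix. In the paper one finds $\ex^{-S^{(2)}(\nu_A(t))}=\bra{r}\left(\mc{T}^{\,p}\ket{V_t\ot(\phi^+)^{\ot\cdots}\ot V_t^\dg}\right)^{\ot 2}$, i.e.\ the rainbow-state overlap of the \emph{tensor square} of the very same transfer-matrix-propagated vector that appears in $\braket{\phi^+|\nu_A(t)|\phi^+}$. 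Because two copies of $\mc{T}^{\,p}$ appear, the purity scales as $\lambda^{2p}$, so that $\ex^{-\frac{1}{2}S^{(2)}}=\sqrt{\tr[\nu_A^2]}\sim\lambda^{p}$, matching $G(V_t)\sim\lambda^{p}$ from Thm.~\ref{thm:LOE_OTOC} once the $-\frac{1}{d_A^2-1}$ term is neglected.

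Under your claimed structure $\tr[\nu_A(t)^2]=\bra{L'}\mc{T}^{\,p}\ket{R'}\sim\lambda^{p}$, the square root is not harmless: it would give $\ex^{-\frac{1}{2}S^{(2)}}\sim\lambda^{p/2}$, a leading exponent \emph{half} that of the OTOC side, and the claimed matching of leading-order scaling in Eq.~\eqref{eq:ReyniBound} would fail. So your concluding remark that "the square root does not affect the leading exponent" is backwards: the square root is essential, and it only restores agreement because the two-copy (folded-squared) contraction doubles the exponent first. To close the gap you need to carry out the doubled-copy bookkeeping explicitly (as the paper does, introducing the rainbow boundary state and keeping both copies of $\mc{T}^{\,p}$), rather than asserting that squaring $\nu_A$ leaves a single copy of the same transfer matrix. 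A secondary, cosmetic point: the paper's appendix writes the objects as $\mc{T}_{x_-}[U]^{x_+}$ (width $x_-$, power $x_+$) while the main text and the Observation use the opposite labeling; your proposal follows the main-text convention, which is fine, but be aware the appendix identities you would be invoking are stated in the other convention.
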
 
A proof and detailed explanation of this result can be found in App.~\ref{ap:Pavel_TransferMatrix}. There we also further support this result with numerical examples of Haar random unitary bricks, which show similar scaling for both sides of Eq.~\eqref{eq:ReyniBound}. 
In contrast, we will now delve into a specific model where the scaling does not match, in order to highlight that scrambling is distinct from chaos.

Consider the Floquet interacting XXZ model on qubits, consisting of a brickwork dynamics (see Fig.~\ref{fig:brick}) with two-site unitary 
\begin{equation}
    U_{\mathrm{XXZ}}\!=\!\exp\left[\!-i\left( \frac{\pi}{4}\sigma_x \!\otimes \!\sigma_x \!+\! \frac{\pi}{4}\sigma_y \!\otimes\! \sigma_y \!+\! J \sigma_z \!\otimes \!\sigma_z \right) \!\right], 
    \label{eq:XXZ}
\end{equation} 
where $J$ is a free parameter. We have set the parameter in front of $\sigma_x \otimes \sigma_x$ and $\sigma_y \otimes \sigma_y$ to $\pi/4$ to impose \emph{dual-unitarity}~\cite{Vanicat2018,Ljubotina2019,Bertini2019exact}, which we later discuss. Moreover, we additionally specify that $J\neq \pi/4$, as $J=\pi/4$ yields the SWAP circuit as in Eq.~\eqref{eq:pathogolocal}. This dynamics is not chaotic. In particular, the LOE scales logarithmically with time, characteristic of interacting integrable
models~\cite{Muth2011,Alba2021}. However, we will see that the OTOC decays exponentially for all times (or is constantly minimal), indicative of strong scrambling. 
\begin{restatable}{thm}{scrambNoChaos} \label{thm:scrambNoChaos}
    \textbf{(Scrambling without Chaos)} The Floquet dual-unitary XXZ model~\eqref{eq:XXZ} produces an exponentially decaying OTOC. Concretely, for a single site operator $V$, Eq.~\eqref{eq:G_du} reduces to
     \begin{align}
        G(V_t)|_{\mathrm{XXZ}}&=\begin{cases}
    \frac{-1}{d_A^2-1},& \text{if } t \in \ell_A \\
    \beta \ex^{- \alpha (x_-) } + (1-\beta),              & \text{if } t \in \ell_{\bar{A}}.
    \end{cases} \label{eq:xxz_result1}
    \end{align} 
   with positive constants $\alpha$ and $\beta$ reported in Eq.~\eqref{eq:AlphaBeta}. 
    For any $V$ orthogonal to $\sigma_z$, the constants are such that $G(V_t)$ decays to a minimal (negative) value.
\end{restatable}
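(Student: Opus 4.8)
\emph{Proof strategy.}~The plan is to reduce the claim to a single scalar and then evaluate it by a dual-unitary collapse of the relevant tensor network. By Theorem~\ref{thm:LOE_OTOC} and Eq.~\eqref{eq:mainThm}, $G(V_t)=\frac{1}{d_A^2-1}\big(d_A^2\,q_A(t)-1\big)$ with $q_A(t):=\bra{\phi^+}\nu_A(t)\ket{\phi^+}$, so it is enough to show that $q_A(t)=0$ in the geometry $t\in\ell_A$, and that $q_A(t)$ has the form turning Eq.~\eqref{eq:mainThm} into the second line of Eq.~\eqref{eq:xxz_result1} when $t\in\ell_{\bar A}$. First I would draw $\nu_A(t)=\tr_{\bar A^{(\prime)}}\!\ket{V_t}\!\bra{V_t}$ (Eq.~\eqref{eq:nu_def}) in the folded brickwork picture of Fig.~\ref{fig:brick}: the copies of $U_{\mathrm{XXZ}}$ coming from $\ket{V_t}$ and $\bra{V_t}$, the single-site traceless operator $V$ inserted at $y=0$, the legs on $\mc{H}_{A^{(\prime)}}$ closed by the two $\ket{\phi^+}$ from the fidelity, and those on $\mc{H}_{\bar A^{(\prime)}}$ closed by $\ket{\phi^+}$ from the partial trace. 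Then $q_A(t)$ is a closed contraction supported on the (doubled) light cone of $V$.

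Second, I would apply dual-unitarity. A dual-unitary gate contracted against a maximally entangled pair on two of its legs, in either the space or the time channel, reduces to the identity on the other two; iterating this from the identity-like boundary conditions prunes the bulk and collapses $q_A(t)$ onto a quasi-one-dimensional strip running along the light-cone edge, leaving $q_A(t)=\langle L_V|\,\mc{M}^{\,x_-}\,|R_V\rangle$ for the spacetime transfer matrix $\mc{M}$ of App.~\ref{ap:Pavel_TransferMatrix} acting on a small local space, with boundary covectors $\langle L_V|$, $|R_V\rangle$ fixed by $V$ and by which light-cone edge lies in $\mc{H}_A$; this is the single-site specialization of Eq.~\eqref{eq:G_du}. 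In the geometry $t\in\ell_A$ the edge that is being transfer-contracted does not reach $\mc{H}_A$, so the same collapse instead closes the diagram onto a factor proportional to $|\tr V|^2=0$, giving the first line of Eq.~\eqref{eq:xxz_result1}.

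Third, I would evaluate $\mc{M}$ for the XXZ gate. Using $U_{\mathrm{XXZ}}=\ex^{-iJ\,\sigma_z\ot\sigma_z}\,\ex^{-i\frac{\pi}{4}(\sigma_x\ot\sigma_x+\sigma_y\ot\sigma_y)}$ (the two factors commute, and the second is a $\sqrt{\swap}$-like gate) and expanding the folded legs in the Pauli basis, $\mc{M}$ block-diagonalizes. The block built from $\sigma_z$ --- the conserved $U(1)$ density, transported rigidly along one light-cone edge --- contributes a unit eigenvalue, while the block spanned by $\sigma_x,\sigma_y$ has a dominant eigenvalue $\lambda(J)$ with $|\lambda(J)|<1$ for all $J\neq\pi/4$ (at $J=\pi/4$ the gate is $\swap$, $|\lambda|=1$, and one recovers Eq.~\eqref{eq:pathogolocal}). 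Diagonalizing this finite matrix and projecting $\langle L_V|$, $|R_V\rangle$ onto its eigenvectors fixes the decay rate $\alpha$ (set by $|\lambda(J)|$) and the weight $\beta$ of Eq.~\eqref{eq:AlphaBeta}; $\alpha>0$ is the statement $|\lambda(J)|<1$, and $\beta>0$ follows from the eigenvector overlaps entering as squared magnitudes. Finally, if $V\perp\sigma_z$ then $\langle L_V|$ (equivalently $|R_V\rangle$) has no overlap with the unit-eigenvalue eigenspace, so the non-decaying part of $q_A(t)$ vanishes, $q_A(t)\to0$, and Theorem~\ref{thm:LOE_OTOC} gives $G(V_t)\to-1/(d_A^2-1)$, the minimal value.

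The main obstacle is the second step: performing the dual-unitary collapse cleanly with a nontrivial source $V$ at $y=0$ rather than the identity, while tracking all the folded copies and the two inequivalent light-cone geometries, so that the correct $\mc{M}$ and boundary covectors are isolated. Once that is in place, the explicit diagonalization of $\mc{M}$ for the XXZ gate and the extraction of $\beta$ is a finite, if tedious, computation; a secondary subtlety is the bookkeeping of the $-1/(d_A^2-1)$ offset and of the normalization of $\nu_A(t)$, so that the signs in Eq.~\eqref{eq:AlphaBeta} come out right and the $J\to\pi/4$ limit reproduces the SWAP circuit smoothly.
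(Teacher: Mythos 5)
Your proposal is correct and follows essentially the same route as the paper: reduce to the dual-unitary transfer-matrix expression of Theorem~\ref{thm:du} (with the $t\in\ell_A$ case collapsing to $|\tr V|^2=0$), then evaluate the light-cone channels for the XXZ gate in the Pauli basis, where the $\sigma_z$ sector carries eigenvalue one and the $\sigma_{x,y}$ sector carries $\sin(2J)<1$ for $J\neq\pi/4$, yielding $\alpha$ and $\beta$ as in Eq.~\eqref{eq:AlphaBeta}. The only cosmetic difference is that you propose to re-derive the tensor-network collapse and block-diagonalize the channel directly, whereas the paper simply cites Theorem~\ref{thm:du} and the known qubit dual-unitary classification giving $\mc{M}_\pm=\mathrm{diag}(1,\sin 2J,\sin 2J,1)$.
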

A proof for this relies on the Thms.~\ref{thm:LOE_OTOC} and \ref{thm:du}, and is supplied in App.~\ref{ap:Pavel_TransferMatrix}.
 The fact that the OTOC exhibits (maximal) exponential decay for this clearly integrable model is stark evidence of the distinction between scrambling and chaos. This lays bare the main thesis of this Letter: while the OTOC will always bear witness to chaos, there exists a wide variety of dynamics that are scrambling but not chaotic. In other words, \emph{the decay of the OTOC is necessary but not sufficient for chaos}.

Eq.~\eqref{eq:XXZ} is a particular example of a dual-unitary model~\cite{Bertini2019exact} (denoted by $\mathbb{D}$), which spread information with maximally velocity~\cite{Bertini2019KI,Claeys2020,Beretini2020otoc,Harrow2022}. Using Eq.~\eqref{eq:mainThm}, we can actually compute the average OTOC for this entire class of dynamics. In these local circuits each brick is unitary in both the space and time direction, which enables analytic computations of a range of quantities such as arbitrary local two-time correlation functions~\cite{Bertini2019exact}. Far beyond the trivial swap circuit \eqref{eq:pathogolocal}, these models are generically chaotic~\cite{Bertini2021random} and include for example the (chaotic) self-dual kicked Ising model~\cite{Akila_2016,Bertini2018KI,Bertini2019KI} and the (integrable) Floquet Heisenberg XXZ model~\cite{Vanicat2018,Ljubotina2019}, as in Eq.~\eqref{eq:XXZ}.

In terms of the (doubled) local, bipartite Hilbert spaces as in Eq.~\eqref{eq:u}, we define the CPTP maps $\mc{M}_+:= \bra{\id}_{\mathrm{i}_1^{(\prime)}} U^* \otimes U \ket{\id}_{\mathrm{o}_2^{(\prime)}}$ and $\mc{M}_-:= \bra{\id}_{\mathrm{i}_2^{(\prime)}} U^* \otimes U \ket{\id}_{\mathrm{o}_1^{(\prime)}}$. These local maps govern the decay of $2-$point correlations in $\mathbb{D}$~\cite{Bertini2019exact}. 
We can exactly express the OTOC average in terms of these maps.
\begin{restatable}{thm}{du}\label{thm:du} \textbf{(OTOC in Dual Unitary Circuits)}
    For evolution according to dual unitary circuits $\mathbb{D}$, the average OTOC is exactly 
     \begin{align}
        G(V_t)|_{\mathbb{D}}&=\begin{cases}
    -\frac{1}{d_A^2-1},&  t \in \ell_A \\
    \frac{1}{d_A^2-1}({d_A^2 \bra{V} \mc{M}_+^{x_+} \mc{M}_-^{x_+} \ket{V} - 1}),              & t \in \ell_{\bar{A}}.
    \end{cases} \label{eq:G_du} 
    \end{align}
\end{restatable}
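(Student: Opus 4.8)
The plan is to reduce the claim to the single scalar $\braket{\phi^+|\nu_A(t)|\phi^+}$ and then evaluate it using the contraction identities special to dual-unitary circuits. Theorem~\ref{thm:LOE_OTOC} already gives $G(V_t)=\tfrac{1}{d_A^2-1}\big(d_A^2\braket{\phi^+|\nu_A(t)|\phi^+}-1\big)$ for arbitrary dynamics, so it suffices to evaluate this overlap in the two geometries. Writing $\braket{\phi^+|\nu_A(t)|\phi^+}=\bra{V_t}\big(\ket{\phi^+}\!\bra{\phi^+}_{A^{(\prime)}}\ot\id_{\bar{A}^{(\prime)}}\big)\ket{V_t}$ and noting that capping the $A^{(\prime)}$ legs of $\ket{V_t}$ with $\bra{\phi^+}$ produces, up to dimension factors, the Choi vector of the partial trace $\tr_{A}[V_t]$, the overlap becomes proportional to the squared Hilbert--Schmidt norm of $\tr_{A}[V_t]$, the Heisenberg operator traced over $\mc{H}_A$. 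Equivalently one works purely graphically: $\braket{\phi^+|\nu_A(t)|\phi^+}$ is the folded brickwork diagram built from $\ket{V_t}$ and $\bra{V_t}$ glued along $\bar{A}^{(\prime)}$ and closed on $A^{(\prime)}$ by maximally entangled states.

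For $t\in\ell_A$ the region $\mc{H}_A$ is connected and contains both light-cone edges, hence the entire light cone of the single-site operator $V$, so $V_t=O_A\ot\id_{\bar{A}}$. Then $\tr_{A}[V_t]\propto\tr[V_t]\,\id_{\bar{A}}$, which vanishes since $V$ (and hence $V_t=U_t^\dg V U_t$) is traceless; equivalently $\ket{V_t}$ is a product state across $A^{(\prime)}\!:\!\bar{A}^{(\prime)}$, pure on $\mc{H}_{A^{(\prime)}}$ and orthogonal to $\ket{\phi^+}$ precisely because $\tr[V_t]=0$. Hence $\braket{\phi^+|\nu_A(t)|\phi^+}=0$ and $G(V_t)=-\tfrac{1}{d_A^2-1}$.

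For $t\in\ell_{\bar{A}}$ the cut between $A$ and $\bar{A}$ falls at position $a$ strictly inside the light cone, so $x_\pm=t\pm a$ measure the two portions of its width. Here I would evaluate the graphical object above by contracting the cap on $A^{(\prime)}$ and the glueing on $\bar{A}^{(\prime)}$ through the brickwork; standard unitarity cancellations reduce the interior of the traced region to one-dimensional contractions running along the light-cone edge, and since $V_t=U_t^\dg V U_t$ is a sandwich of a forward ($U_t$) and a backward ($U_t^\dg$) circuit layer there are two such contractions, which dual-unitarity turns into strings of the single-site CPTP maps $\mc{M}_\pm$ (the same maps that govern two-point correlations in $\mathbb{D}$~\cite{Bertini2019exact}; their trace preservation is what closes up the residual identity contractions). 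Each string has length $x_+$, acts on $\ket{V}$, and the two compose to $\mc{M}_+^{x_+}\mc{M}_-^{x_+}$; keeping track of the powers of $d$ generated by tracing out identity bricks (both those in $\mc{H}_A$ outside the light cone and the collapsed bulk ones) then yields exactly $\braket{\phi^+|\nu_A(t)|\phi^+}=\bra{V}\mc{M}_+^{x_+}\mc{M}_-^{x_+}\ket{V}$. Substituting into Theorem~\ref{thm:LOE_OTOC} gives Eq.~\eqref{eq:G_du}.

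The main obstacle is this last piece of bookkeeping: laying out the (multiply) folded diagram so that the bulk cancellations are manifest, correctly assigning each surviving column of gates to a light-cone edge (hence to $\mc{M}_+$ or $\mc{M}_-$) and to the right slot in the product, confirming that it is $x_+$ that controls both strings, and verifying that all dimension factors cancel so that no prefactor survives in front of $\bra{V}\mc{M}_+^{x_+}\mc{M}_-^{x_+}\ket{V}$. The cleanest way to carry this out is to use the space--time duality of $\mathbb{D}$ and read the circuit along the light-cone direction from the outset, so that the quantity is literally a transfer-matrix word in $\mc{M}_\pm$ before any trace is performed; the Euclidean-type contractions developed in Refs.~\cite{Kos2020,Claeys2020} (see App.~\ref{ap:Pavel_TransferMatrix}) then apply directly.
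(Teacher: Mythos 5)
Your reduction via Theorem~\ref{thm:LOE_OTOC} to the single overlap $\braket{\phi^+|\nu_A(t)|\phi^+}\propto\tr_{\bar{A}}\bigl[\tr_A[V_t]^\dg\,\tr_A[V_t]\bigr]$ is correct, but your treatment of the first case rests on a false premise. In the paper's setup the boundary $a$ of $\mc{H}_A$ lies \emph{inside} the light cone (this is what makes $x_\pm=t\pm a$ meaningful), so $t\in\ell_A$ means only that the \emph{right} light-cone edge sits in $\mc{H}_A$, while the left portion of the light cone lies in $\mc{H}_{\bar{A}}$. Hence $V_t\neq O_A\ot\id_{\bar{A}}$ in general, $\ket{V_t}$ is not a product state across $A^{(\prime)}\!:\!\bar{A}^{(\prime)}$, and $\tr_A[V_t]$ is not proportional to $\id_{\bar{A}}$; your argument only covers the degenerate sub-case where $A$ swallows the entire light cone (essentially the SWAP example of Eq.~\eqref{eq:pathogolocal}). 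The vanishing of the overlap for $t\in\ell_A$ is genuinely a dual-unitary statement: the paper obtains it by contracting the folded diagram with the dual-unitarity rule, collapsing everything to $|\tr[V]|^2=0$. Indeed, if your support-based argument sufficed, the first line of Eq.~\eqref{eq:G_du} would hold for arbitrary brickwork circuits with the cut inside the light cone, which is false.

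For the second case your outline does follow the paper's route (unitarity plus dual-unitarity contractions of the folded diagram, producing light-cone strings of the maps $\mc{M}_\pm$ acting on $\ket{V}$ with all dimension factors cancelling), but the decisive step --- the diagrammatic bookkeeping that you yourself flag as ``the main obstacle'' --- is precisely the content of the theorem and is not carried out. As written, the claims that each string has length $x_+$, that the two strings compose to $\mc{M}_+^{x_+}\mc{M}_-^{x_+}$, and that no prefactor survives are asserted rather than derived, so the proposal identifies the right strategy but does not yet constitute a proof of either branch of Eq.~\eqref{eq:G_du}.
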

A proof for this is supplied in App.~\ref{ap:Pavel_TransferMatrix}.
We stress that our result for $G(t)$ in Eq.~\eqref{eq:G_du} relies \emph{only} on the dual unitarity property, both in the chaotic and non-chaotic cases. This is in contrast to previous work computing OTOCs in $\mathbb{D}$, which require the `completely chaotic' assumption~\cite{Claeys2020,Beretini2020otoc}. Theorem~\ref{thm:du} is valid for arbitrary $t$ without additional averaging or conditions, and $W$ may have arbitrarily large support in contrast to the exclusively single site operators considered in Refs.~\cite{Claeys2020,Beretini2020otoc}.

We can further specify the dual unitary dynamics to be `completely chaotic'~\cite{Kos2020} (`maximally chaotic' in Ref.~\cite{Claeys2020}), defined by the property that the eigenvectors with eigenvalue one of the transfer matrix $\mc{T}$ discussed around Observation~\ref{obs:brickwork} are limited to a minimal set; see App.~\ref{ap:Pavel_TransferMatrix}.
This property generically holds, but is violated if there are additional symmetries (e.g. kicked Ising model) or additional local conservation laws (e.g. trotterized XXZ model). This property leads to a precise equivalence between the LOE and OTOC. 
\begin{restatable}{crllr}{completelyChaotic} \label{thm:completeChaos}
     For $x_-$ kept fixed, Eq.~\eqref{eq:G_du} can be expressed using the known inequalities of {R\'enyi}-2 LOE entropy,
     \begin{align} \label{eq:asymp_du}
         G(V_t)|_{U_t \in \mathbb{D}} &\leq \exp\left[ \underset{x_- \to \infty}{\lim} \frac{-1}{2}S^{(2)}(\nu_A(t) ) \right],
     \end{align}
     where equality holds exactly for completely chaotic dual unitary circuits for $|\lambda| \geq d^{-1/2}$ and $x_+$ large. Here, $|\lambda|$ is the largest non-trivial eigenvalue of $\mathcal{M}_-$.
\end{restatable}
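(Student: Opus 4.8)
\emph{Proof proposal.} First I would reduce \eqref{eq:G_du} to a statement about $\nu_A(t)$: comparing \eqref{eq:G_du} with \eqref{eq:mainThm} on the branch $t\in\ell_{\bar A}$ yields the exact identity $\bra{V}\mc{M}_+^{x_+}\mc{M}_-^{x_+}\ket{V}=\bra{\phi^+}\nu_A(t)\ket{\phi^+}$, so that $G(V_t)|_{\mathbb{D}}=\frac{1}{d_A^2-1}\left(d_A^2\bra{\phi^+}\nu_A(t)\ket{\phi^+}-1\right)$ and both sides are nonnegative (on the other branch $G(V_t)|_{\mathbb{D}}=-\frac{1}{d_A^2-1}<0$, so \eqref{eq:asymp_du} is trivial there). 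The inequality in \eqref{eq:asymp_du} is then three elementary moves: Cauchy--Schwarz in the Hilbert--Schmidt product, $\bra{\phi^+}\nu_A(t)\ket{\phi^+}=\tr[\ket{\phi^+}\!\bra{\phi^+}\,\nu_A(t)]\le\sqrt{\tr[\nu_A(t)^2]}=\ex^{-\frac12 S^{(2)}(\nu_A(t))}$, which is essentially \eqref{eq:ReyniBound}; monotonicity of $y\mapsto\frac{d_A^2y-1}{d_A^2-1}$; and $\frac{d_A^2y-1}{d_A^2-1}=y-\frac{1-y}{d_A^2-1}\le y$, valid since $y=\ex^{-\frac12 S^{(2)}}\le 1$. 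Passing to the regime $x_+\to\infty$ at fixed $x_-$, where $S^{(2)}(\nu_A(t))$ approaches its limiting value (Observation~\ref{obs:brickwork}, App.~\ref{ap:Pavel_TransferMatrix}), gives \eqref{eq:asymp_du}. All the content is in the saturation claim.

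For saturation I would check when each of those three inequalities becomes tight as $x_+\to\infty$ at fixed $x_-$, using that $d_A$ grows exponentially in $x_+$ and that for a completely chaotic dual-unitary circuit $\mc{M}_\pm$ are unital with a nondegenerate unit eigenvalue --- the fixed point being the maximally mixed operator --- and the rest of the spectrum strictly inside the unit disk. Since $V$ is traceless it is orthogonal to that fixed point, so $\bra{V}\mc{M}_+^{x_+}\mc{M}_-^{x_+}\ket{V}\simeq c\,|\lambda|^{2x_+}$, where $\lambda$ is the largest nontrivial eigenvalue of $\mc{M}_-$, $c>0$ is fixed by the overlap of $V$ with the corresponding mode, and positivity of the left side forces the leading rate to the nonoscillatory value $|\lambda|^{2x_+}$. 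Writing $\nu_A(t)=d_A^{-2}\id+\delta$ with $\delta$ traceless, this reads $\bra{\phi^+}\delta\ket{\phi^+}\simeq c\,|\lambda|^{2x_+}$, while the floor $d_A^{-2}$ and the correction $\frac{1-y}{d_A^2-1}$ decay strictly faster; the competition between the exponential growth of $d_A$ and the decay of $\bra{\phi^+}\delta\ket{\phi^+}$ is what singles out the threshold $|\lambda|\ge d^{-1/2}$, below which these subleading pieces dominate and the bound is loose.

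The remaining and central claim is that Cauchy--Schwarz is saturated in the limit, i.e. $\bra{\phi^+}\delta\ket{\phi^+}^2=\tr[\delta^2](1+o(1))$, equivalently that the leading-order deviation of $\nu_A(t)$ from maximal mixing lies along the single direction $\ket{\phi^+}\!\bra{\phi^+}-d_A^{-2}\id$. This is exactly where the completely chaotic hypothesis is needed: I would write $\tr[\nu_A(t)^2]$ as a two-replica contraction of the lightcone transfer matrix (the $(\mc{T}_{x_+}[U])^{x_-}$ object of Observation~\ref{obs:brickwork}; see App.~\ref{ap:Pavel_TransferMatrix}) and use that complete chaoticity restricts the unit-eigenvalue subspace of $\mc{T}$ to its minimal set, so that the surviving contributions organize into the floor $d_A^{-2}$ plus the single-replica square $\bra{\phi^+}\delta\ket{\phi^+}^2$, with every competing channel carrying an extra exponential suppression in $x_+$. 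For $|\lambda|\ge d^{-1/2}$ the floor is negligible against $\bra{\phi^+}\delta\ket{\phi^+}^2\simeq c^2|\lambda|^{4x_+}$, so $\ex^{-\frac12 S^{(2)}(\nu_A(t))}=\sqrt{\tr[\nu_A(t)^2]}\to\bra{\phi^+}\nu_A(t)\ket{\phi^+}$; combined with the previous paragraph this gives $G(V_t)|_{\mathbb{D}}/\ex^{-\frac12 S^{(2)}(\nu_A(t))}\to 1$, i.e. equality in \eqref{eq:asymp_du}.

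I expect the main obstacle to be precisely this last step --- showing that the contractions surviving complete chaoticity reassemble into $\bra{\phi^+}\delta\ket{\phi^+}^2$ rather than an independent term linear in $|\lambda|^{2x_+}$, and confirming the sharpness of the $|\lambda|=d^{-1/2}$ threshold (below which the $d_A^{-2}$ floor, equivalently the subleading constants in \eqref{eq:G_du}, dominates). The inequality and all the ``constant-washing'' limits are routine.
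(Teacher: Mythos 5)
There is a genuine gap in how you obtain the inequality \eqref{eq:asymp_du}. Your chain $G(V_t)\le\bra{\phi^+}\nu_A(t)\ket{\phi^+}\le \ex^{-\frac{1}{2}S^{(2)}(\nu_A(t))}$ is correct but is just Eq.~\eqref{eq:ReyniBound} at \emph{finite} $x_-$, whereas the right-hand side of \eqref{eq:asymp_du} is the $x_-\to\infty$ limit of the entropy while $G(V_t)$ is kept at fixed $x_-$. To pass from your finite-size bound to \eqref{eq:asymp_du} you would need the finite-$x_-$ entropy to exceed its limiting value; generically the opposite holds (the LOE entropy approaches its $x_-\to\infty$ saturation value from below, so $\ex^{-\frac{1}{2}S^{(2)}}$ at finite $x_-$ sits \emph{above} the limit), and your step (d) moreover invokes the wrong limit ($x_+\to\infty$ at fixed $x_-$, which is the regime relevant for the equality claim, not for the limit appearing inside the exponential). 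The paper's proof never uses the finite-size Cauchy--Schwarz bound here: it uses $G(V_t)\le\bra{V}\mc{M}_+^{x_+}\mc{M}_-^{x_+}\ket{V}$ exactly (the affine map $y\mapsto(d_A^2y-1)/(d_A^2-1)$ only decreases $y\le1$, and the right side is $x_-$-independent), and then imports from Ref.~\cite{Kos2020} (Eq.~(80) there) the explicit $x_-\to\infty$ expression for the purity, which equals $\left(\bra{V}\mc{M}_+^{x_+}\mc{M}_-^{x_+}\ket{V}\right)^2$ plus nonnegative extra terms; this is what gives $\lim_{x_-\to\infty}\ex^{-\frac{1}{2}S^{(2)}(\nu_A(t))}\ge\bra{V}\mc{M}_+^{x_+}\mc{M}_-^{x_+}\ket{V}$ for \emph{all} dual-unitary circuits, hence \eqref{eq:asymp_du}.

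On the saturation claim, your physical picture is essentially the right one (competition between the $|\lambda|^{4x_+}$ contribution and the maximally-mixed floor of the lightcone-restricted purity, with threshold $|\lambda|=d^{-1/2}$), but the step you yourself flag as the obstacle --- showing that under complete chaoticity the two-replica transfer-matrix contraction collapses onto the square of the single-replica matrix element, with all other channels exponentially suppressed --- is precisely the content of the results of Ref.~\cite{Kos2020} that the paper's proof cites rather than re-derives. As written, your proposal therefore leaves the central step unproven, and it would amount to redoing that computation. Two further inaccuracies: the completely chaotic property is defined through the unit-eigenvalue eigenspaces of the lightcone transfer matrices $\mc{T}_s[U]$ (and the orthogonal-direction transfer matrix) being minimal, not through nondegeneracy of the unit eigenvalue of $\mc{M}_\pm$ --- the kicked Ising model shows these conditions are not interchangeable; and the purity floor is set by the effective dimension of the lightcone-restricted part of $A^{(\prime)}$ (of order $d^{2x_+}$), not by $d_A$ itself, which in the paper's consecutive-limit regime is tied to the geometry of the cut and diverges.
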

A proof for this is supplied in App.~\ref{ap:Pavel_TransferMatrix}, and uses results from Ref.~\cite{Kos2020}.
Notice here that Eq.~\eqref{eq:asymp_du} is exactly equivalent to Eq.~\eqref{eq:ReyniBound} for $d_A \gg 1$.  That is, asymptotically the average OTOC is proportional to LOE in completely chaotic dual unitary circuits with $|\lambda | \geq d^{-1/2}$. This provides an important new insight into the completely chaotic assumption: for dual unitary circuits it is equivalent to
demanding equality in Eq.~\eqref{eq:ReyniBound}.

\textit{Conclusions and Discussion.---} 
In this Letter, we have demonstrated that the Out-of-Time-Ordered Correlator (OTOC) probes of the Local Operator Entanglement (LOE) of the time-evolving operator $V_t$ (Results~\ref{obs:otocObs}-~\ref{thm:GeoEnt}). This means that formally, scrambling is strictly necessary for chaos. To explore this relationship, we examined OTOCs in dual-unitary circuits (Theorem~\ref{thm:du}), including an explicit example of an integrable dynamics where the OTOC exponentially decays for all times; maximal scrambling without chaos (Theorem~\ref{thm:scrambNoChaos}). Finally, we also determined generic dual-unitary conditions that defines when LOE and OTOC scaling coincides -- including requiring the completely chaotic property (Corollary~\ref{thm:completeChaos}). It would be interesting to extend this and determine the class of models which saturate the bound \eqref{eq:ReyniBound}. We suspect that the members of this class share other interesting properties.

In our results, the (ultra-)local unitary operator $V$ was left unspecified, but its exact choice may influence computations (cf. Thm~\ref{thm:scrambNoChaos}).
Often, $V$ is averaged over~\cite{Roberts2017-en,Yan2020-ae}, but one can take a more subtle approach and define a density operator that encodes all possible OTOCs or local Heisenberg operators~\cite{Zonnios2022,Dowling2022}. We extend our main results to this operator-free setting in App.~\ref{ap:OTOT}.

Finally, recently there has been interest in higher point OTOC generalizations, which are thought to probe finer structure of randomness~\cite{Roberts2017-en,Oliviero2021-rf,Leone2021-ov,Leone2021quantumchaosis}. We can extend present results to connect these to a novel generalization of LOE, but leave this to a future work.

\begin{acknowledgments}
    ND thanks G. A. L. White for useful technical discussions and PK thanks G. Styliaris for fruitful discussions. We acknowledge funding from the DAAD Australia-Germany Joint Research Cooperation Scheme through the project 57445566. ND is supported by an Australian Government Research Training Program Scholarship and the Monash Graduate Excellence Scholarship. PK acknowledges financial support from the Alexander von Humboldt Foundation. KM acknowledges support from the Australian Research Council Future Fellowship FT160100073, Discovery Projects DP210100597 and DP220101793, and the International Quantum U Tech Accelerator award by the US Air Force Research Laboratory. 
\end{acknowledgments}

%
\newpage

\onecolumngrid
\appendix

\tableofcontents

\section{Traceless Operators for the OTOC} \label{ap:traceless}
Here we justify the choice of the operators $V$ and $W$ in the OTOC to be traceless. 

Any operator $W$ can be written as sum of a traceless $W^\prime$ and a constant part proportional to the identity,
    $W =  W^\prime + \frac{\id}{d_B} \tr[W]. $
Using this, the OTOC \eqref{eq:otoc} reduces to
\begin{align}
    F(W,V_t)=&\tr[W^\dg V_t^\dg W V_t] \nn\\
    =& \frac{|\tr[W]|^2}{d_A^2} \tr[\id V_t^\dg \id V_t] + \frac{\tr[W^\dg]}{d_A} \tr[\id V_t^\dg W^\prime V_t]+ \frac{\tr[W]}{d_A} \tr[(W^\prime)^\dg V_t^\dg \id  V_t] +\tr[(W^\prime)^\dg V_t^\dg W^\prime V_t] \\
    =&\tr[(W^\prime)^\dg V_t^\dg W^\prime V_t] +  \mathrm{const}, \nn
\end{align}
where we have used the unitary property of $V_t$ and that $\tr[W^\prime] = 0$. Therefore the only non-trivial part of $W$ that leads to OTOC scaling is the traceless unitary $W^\prime$, and this fact similarly holds for $V$. Hence wlog we assume $W$ and $V$ to be traceless throughout this work.

\section{Proof that OTOC is an Expectation Value of a Unitary} \label{ap:otocObs}
\otocObs*

\begin{proof}
    Rewriting the OTOC over a doubled space,
\begin{align}
            F(W,V_t)&=\tr[W^\dg V^\dg_t W V_t ] \nn \\
            &= \bra{\phi^+} (W^\dg V^\dg_t   W V_t) \otimes \id \ket{\phi^+}  \\
     &= \bra{\phi^+} (V^\dg_t W  V_t)_{A \bar{A}} \otimes (\id_{\bar{A}^\prime}\ot W^*_{A^\prime} ) \ket{\phi^+} \nn 
\end{align}
where $\ket{\phi^+}$ is the Choi state of the identity map across the total doubled system, $\mc{H}_{S^{(\prime)}}=\mc{H}_S \otimes \mc{H}_{S^\prime}=(\mc{H}_A \otimes \mc{H}_{\bar{A}})  \otimes \mc{H}_{A^\prime} \otimes \mc{H}_{\bar{A}^\prime}$, and we have used the identity $(AB \otimes \id )\ket{\phi^+} = (A \otimes B^T )\ket{\phi^+}$. We have also introduced subscripts to make clear the different Hilbert spaces used here. Then, recalling the definition $\ket{V_t}:=( V_t \otimes \id)\ket{\phi^+} $ (Eq.~\eqref{eq:LOEdef}), we have that 
\begin{align}
     F(W,V_t)&= \bra{V_t} (\id_{\bar{A}} \ot W_A ) \otimes (\id_{\bar{A}} \ot W^*_{A^\prime} ) \ket{V_t}  \nn\\
    &= \bra{V_t} (\id_{\bar{A}^{(\prime)}} \ot \mc{W} ) \ket{V_t} \label{eq:intuitiveproof}      
\end{align}
where $\mc{W}:=W \otimes W^*$ is a superoperator acting over both copies of the system, on the vector $\ket{V_t} \in \mc{H}_{S^{(\prime)}}$. 

We also reproduce the above proof graphically, where we believe it is far easier to follow:
\begin{align}
    F(W,V_t)&= \frac{1}{d} \tr\left[\includegraphics[width=0.41\textwidth, valign=c]{otoc_def.pdf}\right] \nn \\
    &=\frac{1}{d} \, \,\includegraphics[scale=1.7, valign=c]{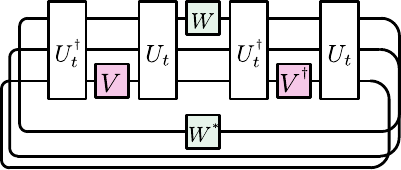} \\
    &=\frac{1}{d}\, \, \includegraphics[scale=1.7, valign=c]{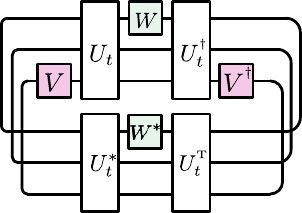} \nn \\
    &= \frac{1}{d}\, \,\includegraphics[width=0.2\textwidth, valign=c]{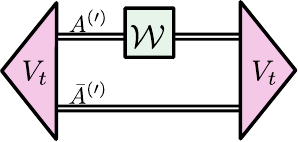}. \nn 
\end{align}
\end{proof}

\section{Proof that maximum entangled Heisenberg operators leads to small OTOC} \label{ap:max_ent}
To derive Eq.~\eqref{eq:max_ent} algebraically, we start directly from Eq.~\eqref{eq:intuitive} and use that the Schmidt spectrum of a maximally entangled state is uniform, 
\begin{align} \label{eq:schmidt_proof}
    F(W,V_t)&=\sum_{i,j} \frac{1}{d_A^2} \bra{\alpha_i} \mc{W} \ket{\alpha_i} \braket{\beta_i|\beta_j} =\frac{1}{d_A^2} \sum_{i}  \bra{\alpha_i} \mc{W} \ket{\alpha_i} =\frac{1}{d_A^2}  \tr[\mc{W}]=\frac{1}{d_A^2} 
 |\tr[W]|^2 = 0. 
\end{align}
Here, the final equality is due to imposing the traceless property of $W$.

Further, choosing the global dynamics to be Haar random, on average one finds that 
\begin{align}
    \braket{\nu_A }_{U_t \in \mathbb{H}} &= \frac{1}{d^2-1}(d^2 \rho_{\infty} - \ket{\phi^+ }\bra{\phi^+} ) \approx \rho_{\infty}. 
\end{align}
where $\rho_{\infty} = \id/d $ is the infinite temperature state, and where we have assumed $d\gg 1$ in the final approximation. This can be proven directly, using the expression for the reduced state of $\ket{V_t}$ on $A^{(\prime)}$ (Eq.~\eqref{eq:nu_def}),
\begin{equation}
    \nu_A(t):=\tr_{\bar{A}^{(\prime)}}[ \ket{V_t}\bra{V_t}],
\end{equation}
and applying the expression for the $2-$fold Haar average, Eq.~\eqref{eq:2fold}.

\section{Explicit Expression for Generic OTOC} \label{ap:1Proof}
\LOEOTOC*

We will first prove the exact form of the average value $G(V_t)$, Eq.~\eqref{eq:mainThm}, and then proceed to separately prove that this average is typical. 
\begin{proof}
Choose the probing unitary $W$ to be Haar random, while preserving its traceless property. That is, consider the average quantity,
\begin{equation}
    G := \frac{1}{d}\int_{R \in \mathbb{H}} dR \tr[R^\dg W^\dg R  V_t^\dg R^\dg W R V_t ].
\end{equation}
Recall that each of $W$, $V$, and $U$ are local unitary matrices, acting on the spaces $A$, $B$ and $A$ respectively,
\begin{equation}
    W \equiv W_A \otimes \id_{\bar{A}}.
\end{equation}
Note also that we take $A$ and $B$ to be disjoint, $B \in \bar{A}$. We write ${W}_R \equiv R^\dg W R \in \mc{H}_A$. $V_t$ is in general global for large enough $t$, so $V_t \in \mc{H}_S$. Then 
\begin{align}
    G(V_t)&= \frac{1}{d}\int dR \tr[{W}_R^\dg V_t^\dg {W}_R V_t] \nn \\
    &=\frac{1}{d}\int dR \tr[  (\id_{\bar{A} } \otimes ({W}_R^\dg)_A) V_t^\dg  (\id_{\bar{A} } \otimes ({W}_R)_A) V_t] \label{eq:proofA}\\
    &= \frac{1}{d}\int dR \tr[  V_t^\dg (\id_{\bar{A} } \otimes ({W}_R^\dg)_A)   \otimes V_t  (\id_{\bar{A}^\prime } \otimes ({W}_R)_{A^\prime}) \mathbb{S}_{S S^\prime}] \nn \\
    &=\frac{1}{d}\int dR \tr[  (V_t^\dg \otimes V_t) \Big( (R^\dg)^{\otimes 2} ({W}^\dg \otimes {W}  ) (R^{\otimes 2}) \otimes \id_{\bar{A}^{(\prime)} }\Big)  \mathbb{S}] \nn, 
\end{align}
where $\mathbb{S} $ is the swap operation.
Written in this form in the doubled space, we can see that the Haar average over $U$ is a $2-$fold channel of the quantity ${W}^\dg_A \otimes {W}_{A^\prime} $. We can compute this exactly using well-known formula which can be derived from Weingarten calculus, which states that the 2-fold Haar average (over $U \in \mc{H}$, with dimension $d$) of some tensor $X \in \mc{H}\otimes \mc{H}$ is~\cite{Roberts2017-en}
\begin{align}
    &\Phi^{(2)}_\mathrm{Haar}(X):= \int dU U \otimes U (X) U^\dg \otimes U^\dg \label{eq:2fold}\\
    &\quad= \frac{1}{d^2-1} \left(\id \tr[X] + \mathbb{S} \tr[\mathbb{S} X] -\frac{1}{d}\mathbb{S} \tr[X] -\frac{1}{d} \id \tr[\mathbb{S} X] \right), \nn
\end{align}
Note also that for any $X$, by definition the $2-$fold Haar average is equal to the $2-$fold average over a unitary $2-$design, that is $\Phi^{(2)}_\mathrm{Haar}(X) = \Phi^{(2)}_\mathrm{2-design}(X)$. 

Examining the terms of the identity \eqref{eq:2fold} for $X \equiv {W}^\dg_A \otimes {W}_{A^\prime} $, note that 
\begin{align}
    &\tr[{W}^\dg_A \otimes {W}_{A^\prime}] = \tr[ {W}^\dg_A ] \tr[   {W}_{A^\prime}] = 0, \text{ and} \\
    &\tr[{W}^\dg_A \otimes {W}_{A^\prime} \mathbb{S}_{A A^\prime}] = \tr[{W}^\dg_A {W}_{A}] = \tr[\id_A] = d_A
\end{align}
where the first line is due to the choice of $W$ being traceless, and the second line comes from $W$ being unitary. Therefore in this case only the second and fourth terms in Eq.~\eqref{eq:2fold} are non-zero. Subbing this into Eq.~\eqref{eq:proofA}, we arrive at 
\begin{align}
    G(V_t)&=\frac{1}{d(d_A^2-1)} \tr[  (V_t^\dg \otimes V_t) \Big( \big( d_A \mathbb{S}_{A A^\prime } - \id_{A A^\prime}\big) \otimes \id_{\bar{A}^{(\prime)} } \Big)  \mathbb{S}_{S S^\prime}] \nn \\
    &=\frac{1}{d(d_A^2-1)} \left(d_A \tr[(V_t^\dg \otimes V_t) \id_{A^{(\prime)}} \mathbb{S}_{\bar{A}\bar{A}^\prime}] - \tr[V_t^\dg V_t ] \right) \label{eq:proofBB}\\
    &= \frac{1}{d(d_A^2-1)} \left(d_A \tr_{\bar{A}^{(\prime)}}[(\tr_A[V_t^\dg] \otimes \tr_{A^\prime}[V_t]) \mathbb{S}_{\bar{A}\bar{A}^\prime}] - d \right) \nn \\
    &=\frac{1}{d(d_A^2-1)} \left(d_A \tr_{\bar{A}}[(\tr_A[V_t^\dg] \tr_{A^\prime}[V_t])] - d \right) \nn
\end{align}
Now we notice that 
\begin{align}
    \tr_A[V_t] = d_A \braket{\phi^+|_{A^{(\prime)}} (V_t \otimes \id) |\phi^+}_{A^{(\prime)}}.
\end{align}
We have brought out the supernormalization here and in the following, such that $\ket{\phi^+}$ is a normalized quantum state. Subbing this into Eq.~\eqref{eq:proofBB}, we get 
\begin{align}
    G&=\frac{1}{d(d_A^2-1)} \left(d_A \tr_{\bar{A}}[|d_A \braket{\phi^+|_{A^{(\prime)}} (V_t \otimes \id) |\phi^+}_{A^{(\prime)}}|^2] - d \right) \label{eq:proofBBB} \\
    &=\frac{1}{d(d_A^2-1)} \left(d_A^3 d_{\bar{A}} \tr_{\bar{A}^{(\prime)}}[| \braket{\phi^+|_{A^{(\prime)}} (V_t \otimes \id) |\phi^+}_{A^{(\prime)}}|^2 \ket{\phi^+}_{\bar{A}^{(\prime)}}\bra{\phi^+}_{\bar{A}^{(\prime)}}] - d \right) \nn
\end{align}
Now we recall the definition of the Choi state of the time-evolved local operator $V_t$, in terms of the maximally mixed state $\ket{\phi^+}$ on the doubled space $\mc{H}_S \otimes \mc{H}_{S^\prime}$
\begin{equation}
    \ket{V_t} := 
    (\id_{S^\prime} \otimes V_t) \ket{\phi^+}.
\end{equation}
Then from Eq.~\eqref{eq:proofBBB}
\begin{align}
    G&= \frac{1}{d(d_A^2-1)} \left(d_A^3 d_{\bar{A}} \tr_{\bar{A}^{(\prime)}}[ \braket{\phi^+|V_t}\! \braket{V_t|\phi^+}_{A^{(\prime)}} ] - d \right) \nn \\
    &=\frac{1}{d(d_A^2-1)} \left(d_A^2 d \tr_{A^{(\prime)}}[ \ket{\phi^+} \bra{\phi^+}(\tr_{\bar{A}^{(\prime)}}[\ket{V_t}\bra{V_t}] )] - d \right) \label{eq:proofC}
\end{align}
We additionally define the reduced state of $\ket{V_t}$ on $\mc{H}_{A^{(\prime)}}$, 
\begin{equation}
    \nu_A(t) := \tr_{\bar{A}^{(\prime)}}[ \ket{V_t}\bra{V_t}].
\end{equation}
This is a normalized density matrix, and as the reduced state of a pure state it be used can measure the entanglement of this state. Then Eq.~\eqref{eq:proofC} reduces to,
\begin{align}
    G&= \frac{1}{d_A^2-1} \left(d_A^2 \tr_{A^{(\prime)}}[ \ket{\phi^+} \bra{\phi^+} \nu_A (t) ] - 1 \right)
    = \frac{1}{d_A^2-1} \left(d_A^2 \mc{F}(\nu_A(t), \ket{\phi^+}\!\bra{\phi^+}) - 1 \right)
\end{align}
Where $\mc{F}$ is the fidelity of quantum states. 
\end{proof}

We will now prove the second part of Theorem~\ref{thm:LOE_OTOC}, the typicality of the average. We first restate this as a standalone result, such that it is technically complete.

\noindent \textbf{Proposition 2.1.} \textit{The probability that the OTOC $F(W_R,V_t)$ for some Haar random, traceless unitary $W_R$ varies from the average $G(V_t)$ by more than some $\epsilon>0$, satisfies 
    \begin{equation} \label{eq:ConcentrationofMeasure}
        \mathrm{Pr}_{R \sim \mathbb{H}}\left\{ |F(W_R,V_t) -G(V_t) | \geq \epsilon \right\} \leq \exp \left(-\frac{d_A^2  \epsilon^2 }{ 64 } \right). 
    \end{equation}}
    
This means that if one chooses the traceless `probing' unitary $W$ in the OTOC $F(W,V_t)$ to be Haar random and acting on a relatively large subspace $\mc{H}_A$, then it is exponentially likely to approximately satisfy Eq.~\eqref{eq:mainThm} and all following results where $G(V_t)$ appears. Even if $\mc{H}_A$ is composed of only $\mc{O}(10^1)$ qubits, the right hand side of Eq.~\eqref{eq:ConcentrationofMeasure} gives a confidence of $\mc{O}(1/\ex^4) \sim 0.0183$ within an error of $\epsilon \sim \mc{O}(2^{-6}) \sim 0.0156$. 

\begin{proof}This proof uses some techniques from that of Proposition 3 from Ref.~\cite{Styliaris2021}, found in its Supplemental Material. However, we arrive at a smaller Lipschitz constant. $\mathbb{H}$ refers to the Haar measure, the unique, unitarily invariant measure on the space of unitary matrices $\mathbb{U}(d)$. We will be applying Levy's Lemma, a concentration of measure result.

    \noindent \textbf{Lemma 2.2.} \textit{ Levy's Lemma states that for $U$ sampled according to the Haar measure $\mathbb{H}$, $f: \mathbb{U}_d \to \mathbb{R}$ a Lipschitz continuous function with Lipschitz constant $K$, and $\epsilon >0$ then
        \begin{equation}
            \mathrm{Pr}_{U\sim \mathbb{H}} \left\{|f(U) - \braket{f(U)}_{\mathbb{H}}|\geq \epsilon \right\} \leq \exp \left(-\frac{d \epsilon^2 }{4 K^2} \right)  
        \end{equation}
        where $K$ is defined such that for all $N,M\in \mathbb{U}_d$ 
        \begin{equation}
            |f(N) - f(M) |\leq K \|N-M \|_2.
        \end{equation}}
        
    We first need to show that the function 
    \begin{equation}
        f(U)= \frac{1}{d}\tr[U^\dg W^\dg U  V_t^\dg U^\dg W U V_t ]
    \end{equation}
    is Lipschitz continuous and determine its constant $K$. We will use the shorthand notation for superoperators, for $X \in \mc{B}(\mc{H}_A)$
    \begin{equation} \label{eq:superops}
        \mc{N}(X) := N^\dg (X) N \otimes \id_{\bar{A}},
    \end{equation}
    and similarly 
    \begin{equation} \label{eq:superops1}
        \mc{M}(X) := M^\dg (X) M \otimes \id_{\bar{A}}.
    \end{equation}
    
    Then, using H\"older's inequality $|\tr[ AB]| \leq \|A  \|_{\infty} \| B\|_1$
    \begin{align}
        |f(N)-f(M)| &=  \frac{1}{d}|\tr[ V_t^\dg (\mc{N}(W^\dg)V_t \mc{N}(W) - \mc{M}(W^\dg)V_t \mc{M}(W)  ]| \nn\\
        &\leq \frac{1}{d}\|V_t \|_{\infty} \|\mc{N}(W^\dg)V_t \mc{N}(W) - \mc{M}(W^\dg)V_t \mc{M}(W) \|_1 \\
        &=\frac{1}{d} \|\mc{N}(W^\dg)V_t (\mc{N}(W) -\mc{M}(W)) - (\mc{M}(W^\dg) - \mc{N}(W^\dg))V_t \mc{M}(W) \|_1 \nn 
    \end{align}
    Here we have also used that $\| X \|_{\infty} = 1$ for unitary $X$, and added and subtracted $\mc{N}(W^\dg) V_t \mc{M}(W)$. We can now apply the triangle inequality,
    \begin{align}
        |f(N)-f(M)| &\leq \frac{1}{d} \Big( \|\mc{N}(W^\dg)V_t (\mc{N}(W) -\mc{M}(W)) \|_1 +\|(\mc{M}(W^\dg) - \mc{N}(W^\dg))V_t \mc{M}(W)  \|_1\Big)\nn \\ 
        &= \frac{1}{d} \Big( \|\mc{N}(W) -\mc{M}(W)\|_1 +\|\mc{M}(W^\dg) - \mc{N}(W^\dg) \|_1\Big)
    \end{align}
    where we have also used that Schatten p norms are unitarily invariant. Then, as $\| X\|_1 \leq \sqrt{d} \| X\|_2$
    \begin{align}
        |f(N)-f(M)| \leq&\frac{\sqrt{d}}{d}\Big( \|\mc{N}(W) -\mc{M}(W)\|_2 +\|\mc{M}(W^\dg) - \mc{N}(W^\dg) \|_2\Big) \nn \\
        =&\frac{1}{\sqrt{d}} \Big( \|\mc{N}(W) -\mc{M}(W)\|_2 +\|\mc{M}(W^\dg) - \mc{N}(W^\dg) \|_2\Big) \\
        =&\frac{2}{\sqrt{d}} \Big( \|\mc{N}(W) -\mc{M}(W)\|_2\Big) \nn\\
        =&\frac{2}{\sqrt{d}} \Big( \|({N}^\dg(W)N -{M}^\dg(W)M) \otimes \id_{\bar{A}}\|_2\Big) \nn \\
        =&\frac{2\sqrt{d_{\bar{A}}}}{\sqrt{d}} \Big( \|{N}^\dg(W)N -{M}^\dg(W)M\|_2 \Big) \nn
    \end{align}
    where we have subbed in the definitions \eqref{eq:superops} and \eqref{eq:superops1}, and used that $\|X \otimes \id_{\bar{A}} \|_2 = \|X \|_2 \|\id_{\bar{A} }\|_2 = \sqrt{d_{\bar{A}}} \|X \|_2$. We will now apply a similar triangle inequality `trick' as before,
    \begin{align}
        |f(N)-f(M)| \leq&\frac{2}{\sqrt{d_A}} \Big( \|{N}^\dg(W)N -{M}^\dg(W)M\|_2 \Big) \nn\\
        =&\frac{2}{\sqrt{d_A}} \Big( \|{N}^\dg(W)(N-M) -({M}^\dg-N^\dg)(W)M\|_2 \Big)  \\
        \leq&\frac{2}{\sqrt{d_A}} \Big( \|{N}^\dg(W)(N-M)\|_2 + \| ({M}^\dg-N^\dg)(W)M\|_2 \Big) \nn \\
        =&\frac{4}{\sqrt{d_A}} \|N-M\|_2, \nn
    \end{align}
    where we have again used the unitary invariance of Schatten norms. Therefore, a Lipschitz constant for the OTOC function $f(U)$ is
    \begin{equation}
        K=\frac{4}{\sqrt{d_A}}.
    \end{equation}
    Directly applying this to Levy's Lemma completes the proof. 
    \end{proof}


\section{LOE Upper Bounds the OTOC} \label{ap:2Proof}
\GeoEnt*

\begin{proof}
\begin{enumerate} [label=\textbf{\Alph*}.]
    \item We first define the geometric measure of entanglement for pure states across the bipartition $A:\bar{A}$,
    \begin{align} \label{eq:geoEnt}
        E_G(\ket{\phi}) :=& 1 - \underset{\ket{\psi_A\psi_{\bar{A}}}}{\mathrm{max}} |\braket{\psi_A\psi_{\bar{A}}|\phi}|^2.
    \end{align}
    The maximum is taken over all states separable in the splitting $A:\bar{A}$, and $\rho_A := \tr_{\bar{A}}[\ket{\phi} \bra{\phi}]$. Then, from Eq.~\eqref{eq:mainThm}, noticing that the first term is equal to the quantum fidelity: $\braket{\phi^+ | \nu_A(t) |\phi^+} = \mc{F}(\ket{\phi^+}\!\bra{\phi^+},\nu_A(t) )$, we have that 
    \begin{align}
        G(V_t) &=\frac{1}{d_A^2-1} \left(d_A^2 \mc{F}\left(\ket{\phi^+}\!\bra{\phi^+},\nu_A(t)\right) - 1 \right)  \nn \\
        &\leq \frac{1}{d_A^2-1} \underset{\ket{\psi_A}}{\mathrm{max}} \left[ d_A^2 \mc{F}\left(\ket{\psi_A}\!\bra{\psi_A},\nu_A(t)\right) - 1 \right] \nn \\
        &= \frac{1}{d_A^2-1} \underset{\ket{\psi_A}}{\mathrm{max}} \left[ d_A^2 \underset{U}{\mathrm{max}} |\braket{\psi_A \psi_{\bar{A}}|\id_A  \otimes  U_{\bar{A}}|V_t} |^2  - 1 \right]  \\
        &= \frac{1}{d_A^2-1} \underset{\ket{\psi_A}}{\mathrm{max}} \left[ d_A^2 \underset{\ket{\psi_{\bar{A}}}}{\mathrm{max}} |\braket{\psi_A \psi_{\bar{A}}|V_t} |^2  - 1 \right] \nn \\
        &=\frac{1}{d_A^2-1} \left[ d_A^2 \left( 1 - E_G(\ket{V_t}) \right)- 1  \right] \nn \\
        &= 1 - \frac{d_A^2}{d_A^2-1}E_G(\ket{V_t}) \nn
    \end{align}
    where the third line we have used Uhlmann's Theorem~\cite{wilde_2017}.
    \item Starting again from Eq.~\eqref{eq:mainThm}, we have that
    \begin{equation}
        G(V_t)= \frac{1}{d_A^2-1} \left(d_A^2 \braket{\phi^+ | \nu_A(t) |\phi^+}  - 1 \right) 
    \end{equation}
    Then, we can use the Cauchy-Schwarz inequality for the Hilbert Schmidt inner product, $\braket{A,B}_{\mathrm{HS}} := \tr[A^\dg B ]$,
    \begin{equation}
        \braket{\phi^+ | \nu_A(t) |\phi^+} = \langle \nu_A(t), \ket{\phi^+}\!\bra{\phi^+} \rangle_{\mathrm{HS}} \leq \sqrt{\tr[\nu_A(t)^2]}\sqrt{\braket{\phi^+|\phi^+}^2} = \sqrt{\tr[\nu_A(t)^2]}
    \end{equation}
    to arrive at 
    \begin{equation}
        G(V_t)\leq \frac{1}{d_A^2-1} \left(d_A^2 \sqrt{\tr[\nu_A(t)^2]} - 1 \right) 
    \end{equation}
    Then,
    \begin{equation}
         G(V_t)\leq \frac{1}{d_A^2-1} \left(d_A^2 \ex^{-\frac{1}{2} S^{(2)}(\nu_A(t)) } - 1 \right),
    \end{equation}
    where the $2-${R\'enyi} entropy is defined as $S^{(2)}(\nu) := -\log(\tr[\nu^2])$.
\end{enumerate}
\end{proof}

\section{OTOC vs. LOE for Brickwork Circuits} \label{ap:Pavel_TransferMatrix}
Here we investigate the scaling of {R\'enyi} Entropy $S^{(2)}(\nu_t)$ and the average OTOC $G(V_t)$ in general brickwork circuits, and therefore: (i) show that Eq.~\eqref{eq:ReyniBound} is a faithful bound according to the leading order scaling of transfer matrices which we will define (ii) numerically verify that Haar random brickwork circuits show a similar scaling (iii) compute $G(V_t)$ for dual unitary circuits, and show how they compare to known results on LOE from Ref.~\cite{Kos2020} and (iv) show that Eq.~\eqref{eq:ReyniBound} is saturated asymptotically for `completely chaotic' dual unitary circuits. We will use techniques and notation that largely follow Ref.~\cite{Kos2020}. For further information on the notation used, see therein. 

We will use the superoperator (doubled/folded) representation of the components of the local unitary circuit, as in the main text represented with calligraphic script. Graphically, it is convenient to define 
\begin{equation}
    \mc{U}:=U \otimes U^* = \includegraphics[scale=1.5, valign=c]{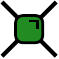} 
\end{equation}
and its transpose conjugate
\begin{equation}
    \mc{U}^*:=U^\dg \otimes U^{\mathrm{T}} = \includegraphics[scale=1.5, valign=c]{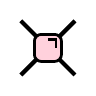} .
\end{equation}
Time runs from top to bottom in these diagrams. Note that each line represents a doubled space. Unitarity then means that 
\begin{equation}
    \tr_{\mathrm{in}}[U \otimes U^*]=\includegraphics[scale=1.5, valign=c]{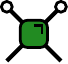}=\includegraphics[scale=1.5, valign=c]{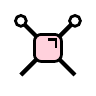}=\includegraphics[scale=1.5, valign=c]{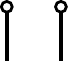},\label{eq:unitarity}
\end{equation}
and also that
\begin{equation}
    (U \otimes U^*)(U^\dg \otimes U^{\mathrm{T}})=\includegraphics[scale=1.5, valign=c]{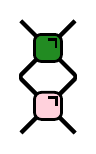}=\includegraphics[scale=1.5, valign=c]{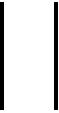}=\id.\label{eq:unitarity1}
\end{equation}
Here, white circles indicate a projection onto the maximally mixed state on the double space, in other words tracing over the space.   

Now consider a local Heisenberg operator $V$ under evolution of a brickwork unitary circuit, the folded diagrammatic representation for its Choi state is 
\begin{equation}
    \ket{V_t} = \includegraphics[scale=1.5, valign=c]{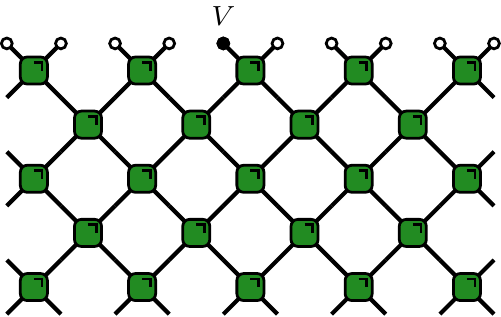} .\label{eq:brickwork}
\end{equation}
Note that every line in this diagram represents a doubled Hilbert space. Here, the black circle represents the Choi state of the local operator,
\begin{equation}
    (V \otimes \id )\ket{\phi^+}=\includegraphics[scale=1.5, valign=c]{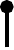} = \includegraphics[scale=1.5, valign=c]{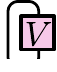}.
\end{equation}
Local circuits have a natural `lightcone' from the unitary property of the bricks: applying the graphical rule Eq.~\eqref{eq:unitarity}, Eq.~\eqref{eq:brickwork} reduces to 
\begin{equation}
    \ket{V_t} = \includegraphics[scale=1.5, valign=c]{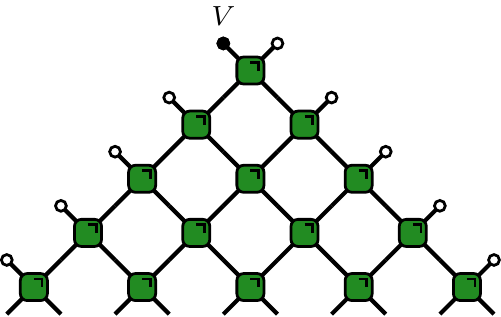} .\label{eq:brickwork_lc}
\end{equation}
Now, we can use this to come up with a graphical expression for the reduced state of $\ket{V_t}$ on $A^{(\prime)}$,
\begin{equation}
    \nu_A(t) = \tr_{\bar{A}^{(\prime)}}[ \ket{V_t} \bra{V_t}]=\includegraphics[scale=1.5, valign=c]{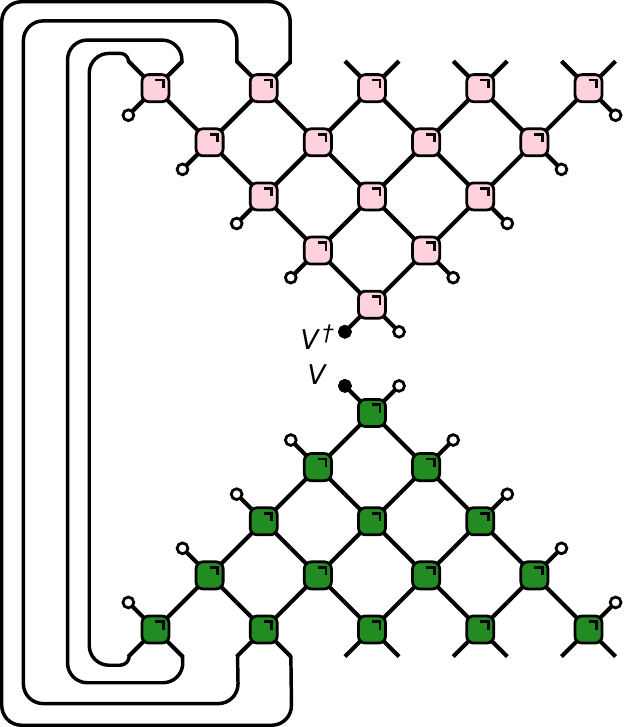}.\label{eq:nu_graph}
\end{equation}
Notice that we can further simplify the three top-left gates with the three bottom-left ones using the unitary condition \eqref{eq:unitarity1}.
The resulting expression can then be used to compute the OTOC average $G(V_t)$ (Eq.~\eqref{eq:mainThm}) and and the $2-${R\'enyi} entropy $S^{(2)}$, in order to test Eq.~\eqref{eq:ReyniBound}.

\subsection{Transfer Matrix Expressions}
\brickObs*
\begin{proof}
    We now define a transfer matrix which yields a convenient description from which to compute quantities in brickwork circuits 
\begin{equation}
    \mc{T}_s[U]=\includegraphics[scale=1.5, valign=c]{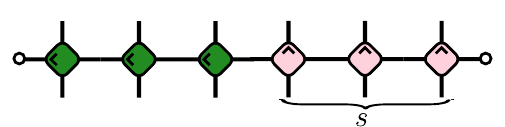}. \label{eq:tranMatrix}
\end{equation}
This transfer matrix appears in both the expression for the $2-${R\'enyi} entropy $S^{(2)}(V_t)$ and the OTOC average $G(V_t)$. In particular, for the five layer example in Eq.~\eqref{eq:nu_graph} ($t=5/2$, $a= -1/2$, $x_+= 2 $, $x_-=3$), using the unitarity graphical identities \eqref{eq:unitarity}-\eqref{eq:unitarity1} it is easy to show that 
\begin{equation}
    \braket{\phi^+|\nu_A(t)|\phi^+}= \includegraphics[scale=1.5, valign=c]{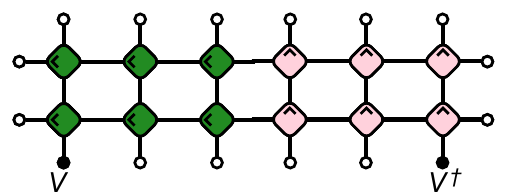} = \bra{(\phi^+)^{\otimes 6}}\mc{T}_3[U]^2 \ket{V_t \ot (\phi^+)^{\otimes 4} \ot V_t^\dg}.
\end{equation}
Whereas, for the $2-${R\'enyi} entropy similarly for the example \eqref{eq:nu_graph}, one can show that
\begin{align}
    \ex^{-S^{(2)}(\nu_A(t))}&= \includegraphics[scale=1.5, valign=c]{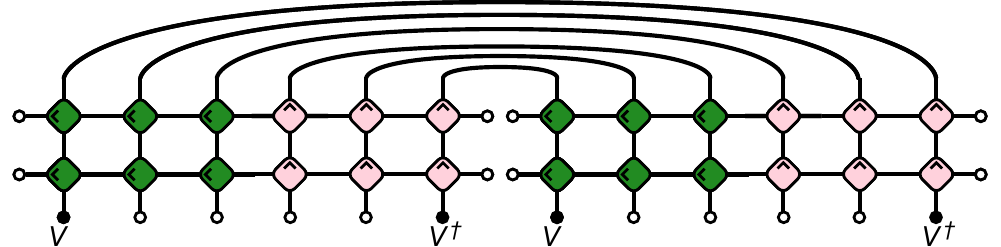} \nn\\
    &= \bra{r_6} \left(\mc{T}_3[U]^2 \ket{V_t \ot (\phi^+)^{\otimes 4} \ot V_t^\dg}\right)^{\ot 2}\label{eq:23}
\end{align}
where $\ket{r_6}$ is the `rainbow state' on six sites, i.e. the contraction of indices in the graphical representation of Eq.~\eqref{eq:23}, formally defined as~\cite{Kos2020}
\begin{equation}
    \ket{r_j}:=\frac{1}{d^j} \sum_{I_1,I_2,\dots,I_j}^{d^2} \ket{I_1 I_2 \cdots I_j } \ket{I_j I_{j-1} \cdots I_1 },\label{eq:rainbow}
\end{equation}
where $d$ refers to the local dimension of a single site here. Recalling that $x_\pm := t \pm a$, in full generality we have that 
\begin{equation}
    \begin{split}
            &\braket{\phi^+|\nu_A(t)|\phi^+} =  \bra{(\phi^+)^{\otimes 2x_-}}\mc{T}_{x_-}[U]^{{x_+}} \ket{V_t \ot (\phi^+)^{\otimes (2x_- - 2)} \ot V_t^\dg},\text{ and}\\ 
     &\ex^{-S^{(2)}(\nu_A(t))}=\bra{r_{x_-}} \left(\mc{T}_{x_-}[U]^{x_+} \ket{V_t \ot (\phi^+)^{\otimes (2x_- - 2)} \ot V_t^\dg}\right)^{\ot 2}. \label{eq:scalingTransf}
    \end{split}
\end{equation}
We note that, in fact, one should define the exponent of $\mc{T}$ as $\lceil x_+\rceil$ and the parameter of $\mc{T}$ as $\lfloor x_- \rfloor$, to account for when $x_\pm$ is half integer (this applies also to the results \ref{thm:scrambNoChaos}-\ref{thm:completeChaos}). One can circumvent this in the main text by just choosing $a$ such that $x_\pm$ is an integer. Then the leading eigenvectors of $\mc{T}_{x_-}[U]$ will dominate both the expressions in \eqref{eq:scalingTransf}, for large $x_+$, when the term $-\frac{1}{d_A^2-1}$ from Eq.~\eqref{eq:mainThm} can be neglected. Strictly speaking, $\bra{(\phi^+)^{\otimes 2x_-}}$ and/or $\bra{r_{x_-}}$ could have zero overlap with the leading eigenvector, but this generically won't happen.
Considering $\lambda$ as the leading non-trivial eigenvalue of $\mc{T}_{x_-}[U]$, this means that for a scaling $x_+$ but constant $x_-$ (i.e. time $t$ and $a$ scaling proportionally), both 
\begin{equation}
    G(V_t) \sim \lambda^{{x_+}}
\end{equation}
and 
\begin{equation}
    \sqrt{\ex^{-S^{(2)}(\nu_A(t))}} \sim \lambda^{{x_+}},
\end{equation}
Therefore, generically both sides of Eq.~\eqref{eq:ReyniBound} have the same asymptotic scaling with large $x_+$ (large $t$ but constant $d_A$).
\end{proof}

\subsection{Random Brickwork Scaling (Numerics)}
As a simple test case of Eq.~\eqref{eq:ReyniBound}, we can choose each brick of a local circuit to be chosen according to the Haar distribution (see details around Theorem~\ref{thm:LOE_OTOC}). In this case we see a similar trend for the left and right hand side of the inequality~\eqref{eq:ReyniBound}. This is presented in Fig.~\ref{fig:bound}.

\begin{figure}[h]
    \centering
    \includegraphics[width=0.45\textwidth]{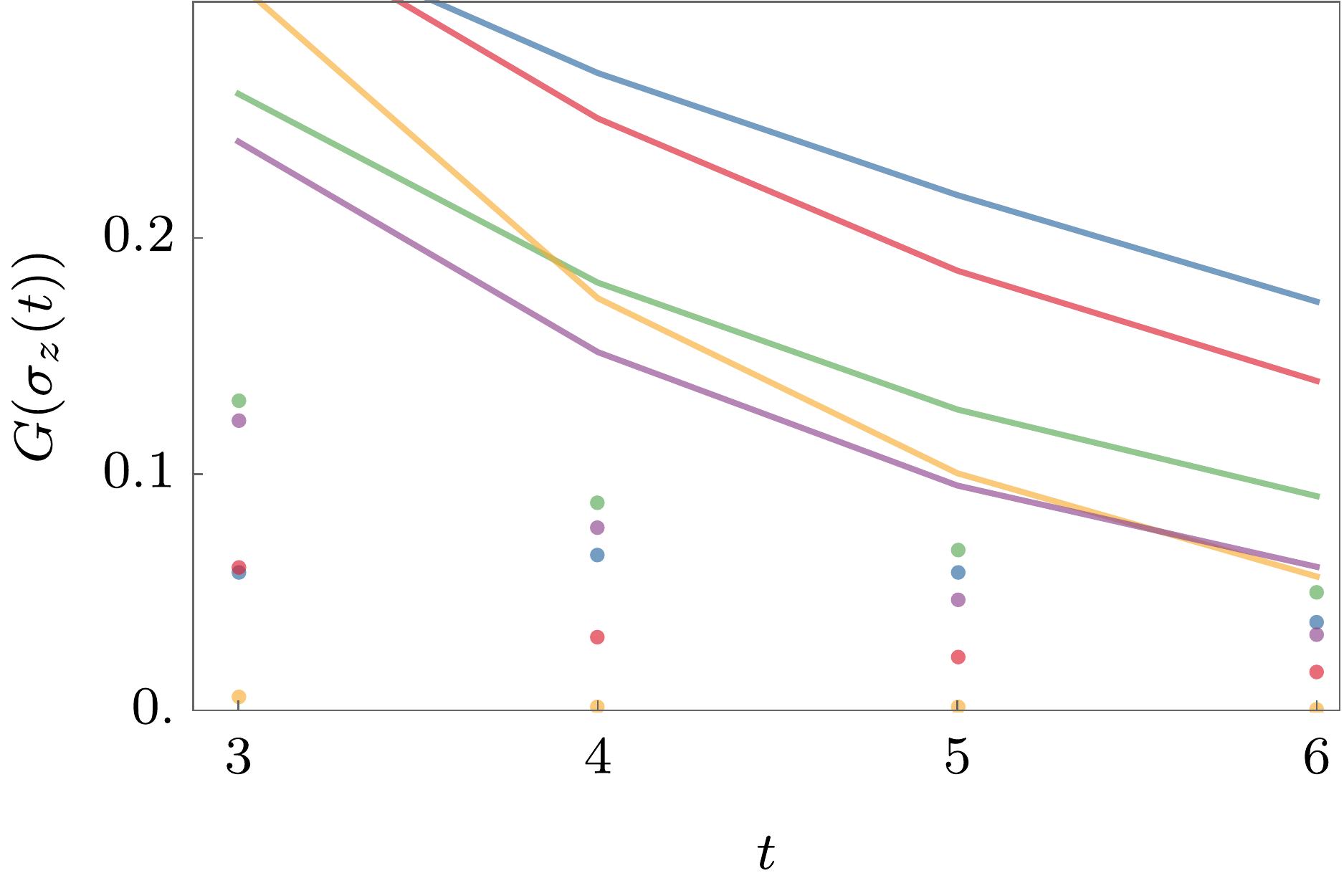}
    \caption{Comparison of the bound from Eq.~\eqref{eq:ReyniBound} (solid lines) and $G(\sigma_z(t)$ (points) for five different evolutions (different colors), which are given by a clean brick-wall quantum circuit made of the same two qubit gate. The five gates were chosen to be Haar random. The bipartition $A^{(\prime)}:\bar{A}^{(\prime)}$ is taken to be across half of the total system ($a=0$). 
    }
    \label{fig:bound}
\end{figure}

\subsection{Dual Unitary Circuits} \label{ap:dual}
As explained in the main text, dual unitarity is the extra condition on a brickwork circuit, that each unitary brick is unitary in both time \emph{and space} directions. Graphically, this means that in addition to the graphical rules Eqs.~\eqref{eq:unitarity}-\eqref{eq:unitarity1}, we also have that 
\begin{equation}
    \tr_{13}[U \otimes U^*]=\includegraphics[scale=1.5, valign=c]{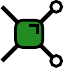}=\includegraphics[scale=1.5, valign=c]{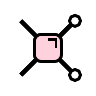}=\includegraphics[scale=1.5, valign=c]{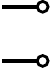},\label{eq:du_graph}
\end{equation}
as well as a spatial analogue of Eq.~\eqref{eq:unitarity1} (which is not relevant for the present proof, but is relevant for the results we use from Ref.~\cite{Kos2020}). Using this, we can now prove our main theorem of this section.

\du*
\begin{proof}
Directly from Eq.~\eqref{eq:nu_graph} we have that 
\begin{equation}
    \bra{\phi^+} \nu_A(t) \ket{\phi^+}=\includegraphics[scale=1.5, valign=c]{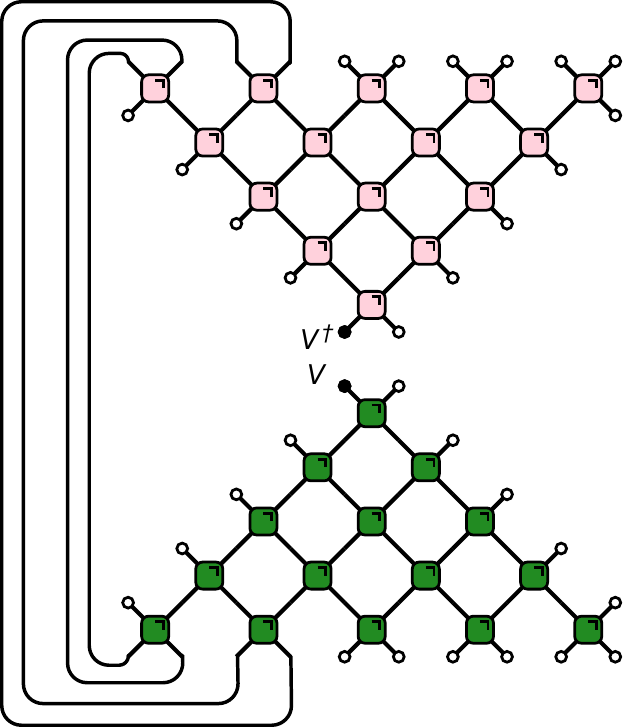}.
\end{equation}
where wlog we have taken $V$ to be local on the site $y=0$ (as in Fig.~\ref{fig:brick}). We have here first chosen the case where $\mc{H}_{\bar{A}^{(\prime)}}$ is on the left - i.e. that the right light cone of $V_t$ ends up in $\mc{H}_{A^{(\prime)}}$, and so $t \in \ell_A$. In the above diagram, the boundary site between $\mc{H}_{A^{(\prime)}}$ and $\mc{H}_{\bar{A}^{(\prime)}}$ is $a=-1/2$. A direct application of the graphical rules \eqref{eq:unitarity}, \eqref{eq:unitarity1} and \eqref{eq:du_graph} leads to
\begin{equation}
    \bra{\phi^+} \nu_A(t) \ket{\phi^+}|_{t \in \ell_A} =\includegraphics[scale=1.5, valign=c]{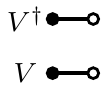}=|\tr[V]|^2 =0 \label{eq:dual_end},
\end{equation}
as $V$ is taken to be traceless. This holds in general when $t \in \ell_A$. If $V$ was instead on site $y=1$, the condition would change to $t \in \ell_A$ (left light cone in $A$), and otherwise the results are the same. Therefore wlog we just take $y=0$ here and in the rest of this work. It can be easily seen that this calculation holds for $y=0$ and arbitrary $a$ and $t$, as long as the right lightcone of $V_t$ ends up in $\mc{H}_{A^{(\prime)}}$. Substituting this result into Eq.~\eqref{eq:mainThm}, this completes the proof for the first condition of Eq.~\eqref{eq:G_du}, when $t\in \ell_A$.

The less trivial, complement situation is when the right light-cone of $V$ ends up in the traced-over region $\bar{A}^{(\prime)}$, $-t\in \ell_{A}$. This corresponds to, for example, 
\begin{equation}
    \bra{\phi^+} \nu_A(t) \ket{\phi^+}|_{V \in \mc{B}(\mc{H}_{\mathrm{even}})} =\includegraphics[scale=1.5, valign=c]{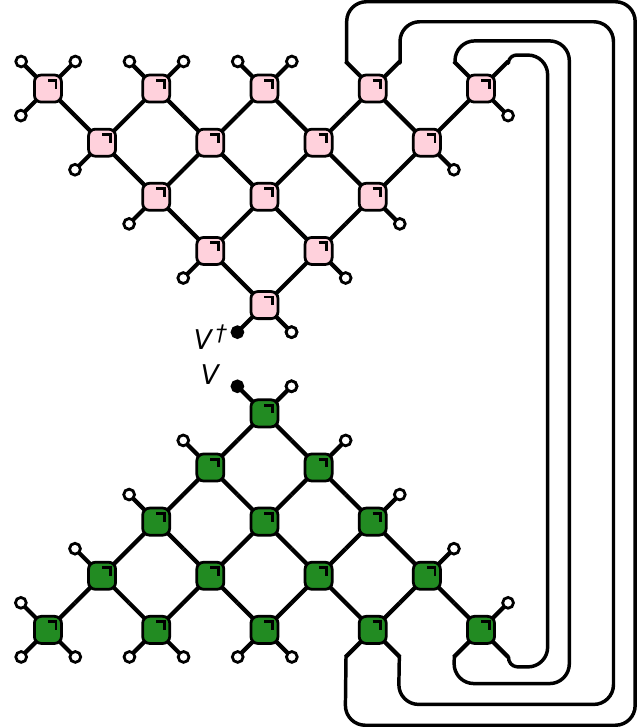} \, ,
\end{equation}
where now in this example $a=1$.
Again applying the graphical rules of dual and standard unitarity (Eqs.~\eqref{eq:unitarity}, \eqref{eq:unitarity1} and \eqref{eq:du_graph}), we arrive at the expression
\begin{equation}
    \bra{\phi^+} \nu_A(t) \ket{\phi^+}|_{t \in \ell_{\bar{A}}} =\includegraphics[scale=1.5, valign=c]{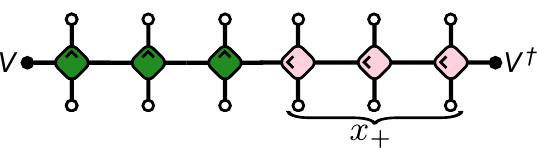}, \label{eq:Gfinal_du}
\end{equation}
recalling that $x_{\pm}:=t\pm a$. Graphically, one has that (defined algebraically below Eq.~\eqref{eq:G_du})
\begin{equation}
    \mc{M}_- := \includegraphics[scale=1.5, valign=c]{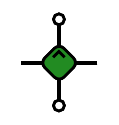} \quad \text{and} \quad \mc{M}_+ :=\includegraphics[scale=1.5, valign=c]{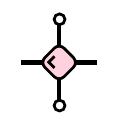}. \label{eq:Mdef}
\end{equation}
Again, this generalizes directly for arbitrary $a$ and $t$. This therefore completes the proof of Eq.~\eqref{eq:G_du} and therefore Theorem~\ref{thm:du}. 

We emphasize that we arrive at this expression only through the assumption of dual unitarity.
In particular, no \emph{completely chaotic} assumption from Refs.~\cite{Kos2020,Claeys2020} is needed so far. Further, this expression \eqref{eq:Gfinal_du} is efficient to compute numerically in any dual unitary circuit, with the matrices $\mc{M}_\pm$ governing the behavior of two point spatiotemporal correlation functions~\cite{Bertini2019exact}.

Interestingly, we will now see that the completely chaotic property results in an equivalence of chaos and scrambling. To make this explicit, we compare the above results with previous results for LOE. In particular, in Ref.~\cite{Kos2020} the LOE {R\'enyi} entropies for dual unitary circuits in the asymptotic light-cone limits $x_{\pm} \to \infty$ are computed. This is achieved for a completely chaotic dual unitary circuit. 

The completely chaotic (called `maximally chaotic' in Ref.~\cite{Claeys2020}) property corresponds to the assumption that the eigenvectors with eigenvalue one of transfer matrices $\mc{T}_s[U]$ (defined in Eq.~\eqref{eq:tranMatrix}), and an orthogonal transfer matrix (not defined explicitly here, but this is the transfer matrix in the `orthogonal lightcone direction'; see Eqs.~(37)-(38) in Ref.~\cite{Kos2020}) are limited to a minimal set. In particular, these minimal eigenvectors can be expressed in terms of the rainbow state \eqref{eq:rainbow}, and always trivially have eigenvalue one due to the dual unitary properties. See Eqs.~(49)-(51) of Ref.~\cite{Kos2020} for an explicit definition and derivation. We also note that in Ref.~\cite{Kos2020}, a numerical analysis reveals that the completely chaotic property is highly typical. What this means is that within the set of randomly chosen dual unitary circuits, elements that are not completely chaotic (likely) form a measure zero subset. It would be interesting to prove this claim analytically.

The precise scaling of the 2-{R\'enyi} entropy from Ref.~\cite{Kos2020} directly implies that for general dual unitary circuits, 
\begin{align}
    &\lim_{x_- \to \infty} \left(\exp[-\frac{1}{2} S^{(2)} (V_t )] \right) \geq \bra{V_t} \mc{M}_+^{x_+} \mc{M}_-^{x_+} \ket{V_t} \label{eq:KosResult}
\end{align}
In particular, Eq.~(80) from Ref.~\cite{Kos2020} is equal to our expression for $\braket{\phi^+ |\nu_A(t) |\phi^+}$ in Eq.~\eqref{eq:G_du} (note that our expressions for $x_\pm$ in this paper are reversed compared to \cite{Kos2020}, due to labeling conventions). Then for $G(V_t)$ in the limit $x_- \to \infty$, the multiplicative factor and additive constant go to one and zero respectively, as $d_A \to \infty $. 
The equality holds under the completely chaotic assumption and for $|\lambda| \geq d^{-1/2}$,  with the consecutive limits $x_+ \to \infty$ and $x_- \to \infty$ (as also necessarily $d_A\to \infty$). Here $|\lambda|$ is the largest non-trivial eigenvalue of $\mathcal{M}_-$. The more general Eq.~\eqref{eq:KosResult} also follows directly from Ref.~\cite{Kos2020}, where we notice that the expression for purity can contain additional positive terms for non completely chaotic examples, and in finite time, resulting in an inequality. 

We therefore arrive at Eq.~\eqref{eq:asymp_du} and therefore concluding the proof of Theorem~\ref{thm:completeChaos}. 

\end{proof}

\subsection{Scrambling Without Chaos in an Integrable Floquet Model} \label{ap:XXZ}
\scrambNoChaos*

\begin{proof} Expanding the single-site unitary qubit operator $V$ in the Pauli basis, we have that 
    \begin{align}
        V&=a_{\id} \id + a_x \sigma_X+ a_y \sigma_Y + a_z \sigma_Z   \nn \\
        &=a_x \sigma_X+ a_y \sigma_Y + a_z \sigma_Z \label{eq:pauliExp}
    \end{align}
     as $\tr[V]=0 \iff a_{\id}=0$. As we consider the normalized Choi state $\ket{V}$, we also have that 
     \begin{equation}
         a_z^2=1-(a_x^2+a_y^2). \label{eq:az}
     \end{equation}
     From Theorem~\ref{thm:du} we know that in dual unitary circuits, 
     \begin{equation}
         G(V_t)= \frac{1}{d_A^2-1}\left( d_A^2 \bra{V} \mc{M}_+^{x_+} \mc{M}_-^{x_+} \ket{V} - 1\right). \label{eq:G_du1}
     \end{equation}
     Now, using the full classification of dual unitary circuits for qubits~\cite{Bertini2019exact}, we know that $\mc{M}_{\pm}$ for the XXZ model for qubits takes the simple form in the Pauli basis
     \begin{equation}
         \mc{M}_\pm = \mathrm{diag}(1,\sin(2J),\sin(2J),1). \label{eq:xxzM}
     \end{equation}
     Substituting Eqs.~\eqref{eq:pauliExp} and \eqref{eq:xxzM} into Eq.~\eqref{eq:G_du1}, evaluating in the Pauli basis
     \begin{align}
          G(V_t)_{\mathrm{XXZ}}&=\frac{1}{d_A^2-1}\left(d_A^2((a_x^2+a_y^2)\sin(2J)^{2(x_-)}+a_z^2)-1 \right) \nn \\
          &=\frac{1}{d_A^2-1}\left(d_A^2((a_x^2+a_y^2)\sin(2J)^{2(x_-)}-(a_x^2+a_y^2))\right)+1,
     \end{align}
     where we have used Eq.~\eqref{eq:az}. 
     This corresponds to the exponential behavior decay with $(x_-)$, as $\sin(2J)<1$ given that $J\neq \pi/4$. For clarity of notation, we arrive at Eq.~\eqref{eq:xxz_result1} by setting
     \begin{align}
         \alpha&=\ln( \frac{1}{\sin(2J)}), \text{ and} \nn\\
         \beta&=d_A^2(a_x^2+a_y^2)/(d_A^2-1). \label{eq:AlphaBeta}
     \end{align}
     From this, for $a_z=0$ and taking the limit $(x_-) \to \infty$ we arrive at
    \begin{equation}
        \underset{(x_-) \to \infty}{\lim} \left(G(V_t)_{\mathrm{XXZ}}|_{a_z=0} \right)=\frac{-1}{d_A^2-1}.
    \end{equation}
\end{proof}

\section{Operator Free Generalization} \label{ap:OTOT}
Here we argue that a operator-free generalization of the OTOC and of the LOE are related in a straightforward manner, generalizing the main results of the Letter. These CP operators encode all possible OTOCs and local Heisenberg operator Choi states (from which to compute the LOE), analogous to defining a density matrix in quantum mechanics to encode all possible measurements which one could make. One can probe novel properties of this CP operator generalization, such as the conditional mutual information which allows one to distinguish between genuine quantum scrambling and decoherence (as in Ref.~\cite{Zonnios2022}).

The `out-of-time-order tensor' (OTOT) is defined as the CP operator $\ups^{\mathrm{OTOT}} \in \mc{H}_{B_i}\otimes \mc{H}_{A} \otimes \mc{H}_{B_f}$ such that~\cite{Zonnios2022}
\begin{equation}
    F(W,V_t)= \tr[ \ups^{\mathrm{OTOT}} (\mc{V} \otimes \mc{W} \otimes \mc{V^*} )].
\end{equation}
Full definitions and further details can be found in Ref.~\cite{Zonnios2022}. The space $B_{i/f}$ are the same spatial Hilbert spaces at the start and end of the OTOC protocol, which are technically indpendent spaces in the quantum combs formalism. We further take the initial state in the OTOT definition to be maximally mixed $\rho \sim \id$. Then for a choice of the operations $\mc{V}$ and $\mc{W}$, one gets exactly the usual OTOC $F(W,V_t)$ from the tensor $\ups^{\mathrm{OTOT}}$.

Similarly, we can generalize the LOE to a `local tensor entanglement' (LTE), for the CP operator $\ups^{\mathrm{LTE}} \in \mc{H}_{B^\prime}\otimes  \mc{H}_{S}$~\cite{Dowling2022},
\begin{equation} \label{eq:LOEgeneralization}
   \ups^{\mathrm{LTE}}:=\mc{U}_{S} \left( \frac{\id_{\bar{B}}}{d_{\bar{B}}}\otimes \ket{\phi^+}\!\bra{\phi^+}_{B^{(\prime)}} \right).
\end{equation}
As detailed in Ref.~\cite{Dowling2022}, if one projects with some choice of maximally entangled state onto the ancilla $B^\prime$, one gets exactly the time evolved Heisenberg operator. From this, the usual LOE can be computed. For example, for a qubit space $\mc{H}_B$, projecting onto the ancilla space $\mc{H}_{B^\prime}$ with the $\psi^+$ bell state results in,
\begin{equation}
    \braket{\psi^+| \ups^{\mathrm{LTE}}|\psi^+} = \ups_{S|x} = X_t,
\end{equation}
where $X_t$ is the time evolved Pauli-$X$ operator.

Our results from the main body can easily be framed in terms of these two objects.
\begin{obs} \label{obs:OTOT}
    The LTE and OTOT are related via
    \begin{equation}
        \ups_{B_i A_t B_f  } = \tr_{\bar{A}}[\ups_{N_iS_t} * \ups_{N_f S_t}^* ] 
    \end{equation}
    where the right hand side is the link product, giving a Hilbert Schmidt inner product on the complement to the probe space, $\mc{H}_{\bar{A}}$, and a tensor product on $\mc{H}_{{A}}$~\cite{Chiribella2009physrevA} (see below in Eq.~\eqref{eq:otot}). Note that $B_{i/f} \equiv N_{i/f} $ due to projection with the maximally entangled state in the definition Eq.~\eqref{eq:LOEgeneralization}.
\end{obs}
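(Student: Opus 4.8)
The plan is to prove Observation~\ref{obs:OTOT} by lifting the argument behind Observation~\ref{obs:otocObs} to the operator-free (quantum-comb) setting: the Choi vector $\ket{V_t}$ in Observation~\ref{obs:otocObs} gets replaced by the LTE operator $\ups^{\mathrm{LTE}}$, and the probe $\mc{W}$ is left free. First I would fix conventions. Write the OTOT $\ups^{\mathrm{OTOT}}\in\mc{H}_{B_i}\ot\mc{H}_A\ot\mc{H}_{B_f}$ as the tensor whose contraction against a triple of CP maps recovers the OTOC, $F(W,V_t)=\tr[\ups^{\mathrm{OTOT}}(\mc{V}\ot\mc{W}\ot\mc{V}^*)]$ with maximally mixed input, and recall from Eq.~\eqref{eq:LOEgeneralization} that $\ups^{\mathrm{LTE}}=\mc{U}_S(\id_{\bar B}/d_{\bar B}\ot\ket{\phi^+}\!\bra{\phi^+}_{B^{(\prime)}})$ is exactly the object whose projection onto a maximally entangled state of the ancilla $B'$ returns the Choi state of the Heisenberg operator $V_t$ (this is the content of the remark below Eq.~\eqref{eq:LOEgeneralization}).

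The core step is an ``unfolding'' of the OTOC that mirrors the graphical proof of Observation~\ref{obs:otocObs}. Starting from $F(W,V_t)=\tr[W^\dg V_t^\dg W V_t]$ and substituting $V_t=U_t^\dg(V\ot\id)U_t$ together with its adjoint, the network splits into a ``forward'' branch built from $U_t,U_t^\dg$ around one insertion of $V$ and a ``conjugate'' branch built from $U_t^*,U_t^{\mathrm{T}}$ around one insertion of $V^*$. Because $W=W_A\ot\id_{\bar A}$, the two branches are joined along the complement region $\mc{H}_{\bar A}$ by a Hilbert--Schmidt contraction, while their legs on $\mc{H}_A$ stay open and are the slots where $\mc{W}$ is eventually inserted. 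Read as combs with the $V$-leg left open, the forward branch is precisely $\ups^{\mathrm{LTE}}$ and the conjugate branch is precisely $(\ups^{\mathrm{LTE}})^*$. Assembling these, the contraction pattern ``Hilbert--Schmidt inner product on $\mc{H}_{\bar A}$, tensor product on $\mc{H}_A$'' is by definition the link product $\tr_{\bar A}[\,\cdot*\cdot\,]$ of Ref.~\cite{Chiribella2009physrevA}, which yields $\ups_{B_iA_tB_f}=\tr_{\bar A}[\ups_{N_iS_t}*\ups_{N_fS_t}^*]$ after the identification $B_{i/f}\equiv N_{i/f}$ forced by the maximally entangled projection in Eq.~\eqref{eq:LOEgeneralization}. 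Contracting both sides against $\mc{V}\ot\mc{W}\ot\mc{V}^*$ then reproduces the relation of Observation~\ref{obs:otocObs}, confirming the identity at the level of the full tensors. I would present the unfolding diagrammatically, exactly as for Observation~\ref{obs:otocObs}, where the bookkeeping is far more transparent.

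The main obstacle is purely the Hilbert-space accounting in the comb formalism: one must keep straight which of the doubled/primed spaces of $\ups^{\mathrm{LTE}}$ plays the role of the input leg $N_i$ versus the output leg $N_f$ of the OTOT, check that the $\id_{\bar B}/d_{\bar B}$ factor in the LTE is consistent with the maximally mixed input assumed in the OTOT definition, and verify that the conjugation in $\ups_{N_fS_t}^*$ matches the $\mc{V}^*$ slot (equivalently, the $U_t^*,U_t^{\mathrm{T}}$ branch) without introducing a spurious transpose on $\mc{H}_A$. Once the diagram is drawn with these labels pinned down, the identity follows immediately from the defining contraction of the OTOT together with Eq.~\eqref{eq:LOEgeneralization}, with no new estimates required.
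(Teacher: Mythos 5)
Your proposal is correct and follows essentially the same route as the paper: the paper's proof is exactly the diagrammatic unfolding you describe, identifying the OTOC network as a forward copy of $\ups^{\mathrm{LTE}}$ and a conjugate copy $(\ups^{\mathrm{LTE}})^*$ joined by a Hilbert--Schmidt contraction on $\mc{H}_{\bar{A}}$ and a tensor product on $\mc{H}_{A}$, i.e.\ the link-product structure shown in Eq.~\eqref{eq:otot}. One small caution: the identity must rest on this tensor-level (open-slot) identification, as in your main argument, and not on recovering the scalar OTOC for all $\mc{V},\mc{W}$, since operators of the form $\mc{V}\ot\mc{W}\ot\mc{V}^*$ do not span the full space and so that check alone would not pin down the tensor.
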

 This observation is immediately apparent graphically, in that 
\begin{equation}
    \ups^{\mathrm{OTOT}} = \includegraphics[width=0.38\textwidth, valign=c]{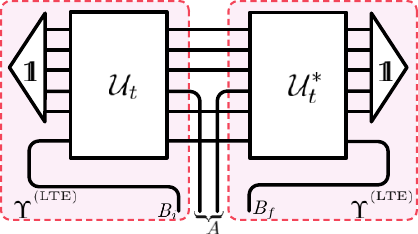}. \label{eq:otot}
\end{equation}
From this, all previous results will directly generalize to this operator-free setting. This is revealing of the close connection between the OTOC and LOE~\cite{Dowling2022}.

\end{document}